\documentclass[a4paper]{quantumarticle}
\pdfoutput=1

\usepackage[utf8]{inputenc}
\usepackage[english]{babel}
\usepackage[T1]{fontenc}
\usepackage{amsmath,amssymb,amsfonts,amsthm}

\newtheorem{theorem}{Theorem}[section]
\newtheorem{corollary}[theorem]{Corollary}
\newtheorem{definition}[theorem]{Definition}
\newtheorem{proposition}[theorem]{Proposition}
\newtheorem{assumption}[theorem]{Assumption}
\usepackage{graphicx}
\usepackage{float}
\usepackage[section]{placeins}
\usepackage{hyperref}
\usepackage{booktabs}
\usepackage{bm}
\usepackage{mathtools}

\newcommand{\minimize}{\operatorname*{minimise}}
\newcommand{\ZC}{\mathbb{Z}_C}
\newcommand{\ZCn}{\mathbb{Z}_C^n}
\newcommand{\DC}{D_C}
\newcommand{\Sk}{S_k}
\newcommand{\dmax}{d_{\max}}

\begin{document}

\title{Quantum Speedup for Network Coordination via Fourier Sparsity}

\author{Vinayak Dixit}
\affiliation{School of Civil and Environmental Engineering, University of New South Wales, Sydney, Australia}

\date{March 8, 2026}

\maketitle

\begin{abstract}
Network coordination---synchronising traffic signals, scheduling trains, assigning communication slots---requires minimising pairwise costs across coupled systems. These problems are NP-hard yet share a common Fourier-sparse structure exploitable by quantum algorithms. We introduce the Fourier Network Coordination problem (Fourier-NC), unifying eight application domains. For abelian and dihedral groups, classical sparse Fourier transforms match quantum in the same oracle model, limiting the advantage to at most polynomial. The genuine separation emerges for the symmetric group~$\Sk$: a conditional super-exponential speedup of $k!\to\mathrm{poly}(k)$ for class-function costs with non-trivial minimisers. When the minimising conjugacy class is structurally determined, the problem lies in NP$\,\cap\,$BQP and is conditionally outside~P (Corollary~\ref{cor:bqp}), placing it in the intermediate complexity regime alongside integer factorisation and graph isomorphism. We formalise the \emph{abelian index} $\alpha(G)=[G:A_{\max}]$ as the structural invariant governing the quantum--classical gap and identify a three-regime complexity trichotomy: abelian ($\alpha=1$, classical sFFT suffices), nearly abelian ($\alpha=\dmax$, polynomial advantage), and strongly non-abelian ($\alpha\gg \dmax$, super-exponential advantage).
\end{abstract}

\section{Introduction}
\label{sec:intro}

Many real-world systems require coordinating timing or ordering across a network: synchronising traffic lights so vehicles hit green waves, scheduling trains to minimise passenger waiting, assigning communication slots to avoid interference, or sequencing signal phases to maintain coordinated vehicle flow. These problems share a common mathematical structure---each pair of neighbours pays a cost that depends only on their \emph{relative offset or ordering}---and that structure is precisely what a quantum computer can exploit. The general problem is NP-hard (we prove this via reduction from MAX-CUT, Theorem~\ref{thm:nphard}), but its Fourier-sparse structure enables exponential quantum speedup for an important subclass---specifically, frustration-free instances whose cost functions have polynomially bounded Fourier dynamic range (a condition verified for all engineering domains considered here; see Propositions~\ref{prop:pmin_cosine}--\ref{prop:pmin_pwl}).

\subsection{Intuitive overview}
\label{sec:intuition}

Consider four traffic lights along a corridor, each cycling through a 60-second period. Each must choose a start-time offset; if adjacent lights differ by 15~seconds, vehicles hit a ``green wave.'' With $60^{100}$ possible assignments for a 100-intersection network, brute-force search is hopeless---but the delay function between each pair of lights is \emph{smooth and periodic}, describable by two or three Fourier components. The Quantum Fourier Transform identifies these components in one sweep, collapsing the search from exponential to $O(mr)$ modes. For cyclic offsets ($\ZC$), classical sparse Fourier transforms match this advantage; the genuine quantum speedup emerges for \emph{orderings} over the symmetric group~$\Sk$, where $k!$~possibilities and no known classical sFFT yield a conditional $k!\to\mathrm{poly}(k)$ separation. The algorithm requires frustration-free networks (Definition~\ref{def:frustration}); frustrated instances receive bounded approximations (Proposition~\ref{prop:frustration_gap}).

\subsection{Contributions}

\begin{enumerate}
\item We define Fourier-NC over~$\ZC$, prove the Fourier Factorization Theorem (Theorem~\ref{thm:fourier_fact}), and give a polynomial-time quantum algorithm (Theorem~\ref{thm:correctness}) with NP-completeness (Theorem~\ref{thm:nphard}) and an $\Omega(C^n)$ classical lower bound (Theorem~\ref{thm:lower_bound}).
\item We formalise the \emph{abelian index} $\alpha(G)=[G:A_{\max}]$ (Definition~\ref{def:abelian_index}), prove the fundamental inequality $\dmax(G)\le\alpha(G)$ (Proposition~\ref{prop:fundamental_ineq}), and establish a three-regime complexity trichotomy. Classical sFFT~\cite{Hassanieh2012} matches quantum for $r$-sparse abelian instances (Sec.~\ref{sec:sfft_gap}).
\item We extend to $\Sk$ (Permutation Coordination) and establish a conditional super-exponential speedup of $k!\to\mathrm{poly}(k)$ (Proposition~\ref{thm:sk_separation}, conditional on Assumption~\ref{ass:no_sfft}) for class-function costs. When the minimising conjugacy class is structurally determined, the problem is in BQP and conditionally not in~P (Corollary~\ref{cor:bqp}), placing it in the intermediate regime alongside factoring and graph isomorphism.
\item We extend to directed networks via~$\DC$ with $4\times$ overhead (Appendix~\ref{app:dihedral}) and provide gate count analysis (Table~\ref{tab:gates_projected}).
\end{enumerate}

\noindent Sections~\ref{sec:framework}--\ref{sec:complexity} develop the abelian ($\ZC$) framework; Section~\ref{sec:symmetric} extends to~$\Sk$, where the genuine super-exponential advantage arises.

\section{Literature Review and Related Work}
\label{sec:litreview}

The coordination problems we unify share a common algebraic structure: minimisation of pairwise costs over cyclic offsets, with Fourier-sparse cost functions (Fig.~\ref{fig:fourier_spectrum}). Despite surface-level differences, eight domains share this structure.

\begin{figure}[!htbp]
\centering
\includegraphics[width=\columnwidth]{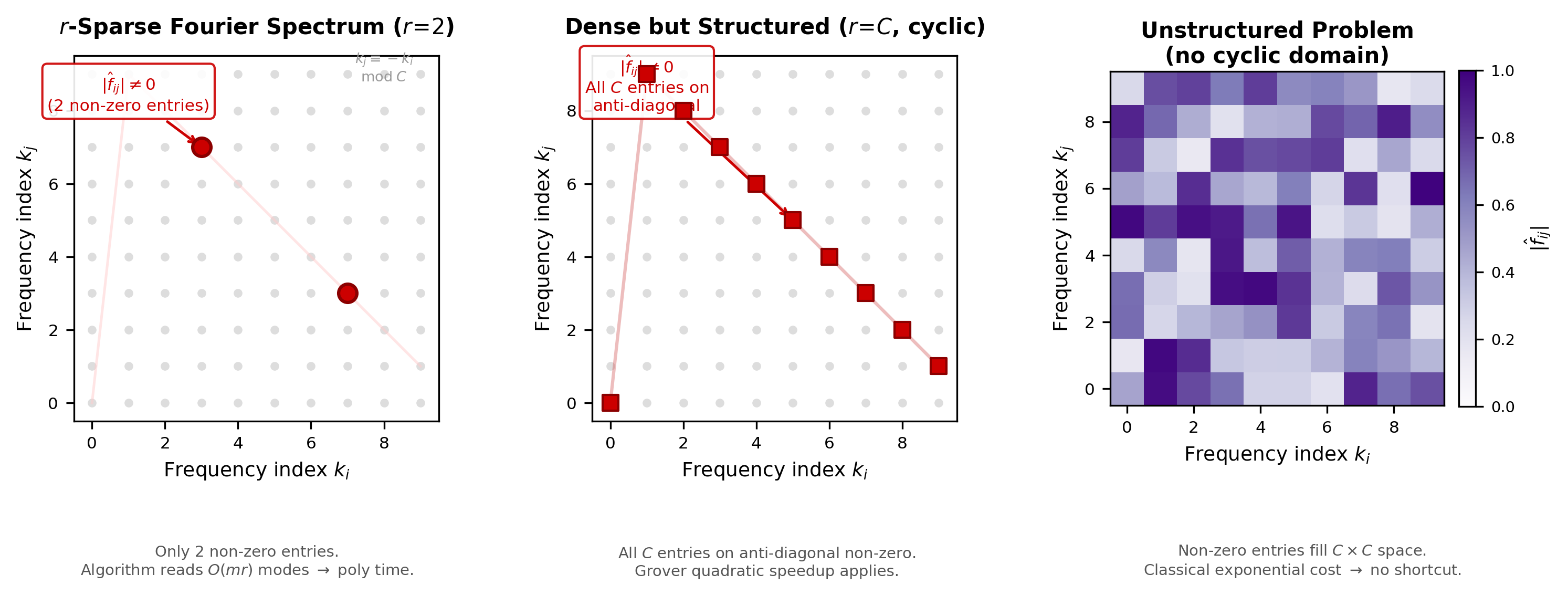}
\caption{Fourier spectrum $|\hat H(k)|$ for a single edge ($C=10$). Left: 2-sparse ($r=2$)---only 2 non-zero entries on the anti-diagonal $k_j=-k_i\bmod C$. Centre: dense but cyclic ($r=C$) with $O(C)$ modes, Grover quadratic speedup. Right: unstructured---$O(C^2)$ modes, no quantum shortcut. The Fourier-NC algorithm exploits left-panel structure.}
\label{fig:fourier_spectrum}
\end{figure}

\subsection{Application domains}

Eight coordination domains share the Fourier-NC structure (Table~\ref{tab:domains}): traffic signal coordination~\cite{Gartner1975,GartnerDeshpande2009,DixitPech2026,Little1981,Gartner1991}, railway timetabling (PESP)~\cite{Serafini1989,LindnerLiebchen2022}, wireless communications (TDMA/FAP)~\cite{RamaswamiParhi1989,Gronkvist2005,Hale1980,Aardal2007}, power systems (OPF)~\cite{BienstockVerma2019,LavaeiLow2012}, oscillator synchronisation~\cite{Kuramoto1975,Strogatz2000}, clock synchronisation~\cite{Awerbuch1985}, and influence scheduling~\cite{OlfatiSaber2007,Kempe2003}. In every case, smooth periodic coupling produces cost functions well-approximated by a short Fourier series ($r=1$--$5$). NP-hardness has been established independently in each domain~\cite{Dauscha1985,Serafini1989,RamaswamiParhi1989,Hale1980,BienstockVerma2019}.

\begin{table}[!htbp]
\centering
\caption{Complexity landscape across application domains. All domains are abelian ($\ZC$); the quantum advantage is an oracle-model separation (at most polynomial for $r$-sparse instances; see Sec.~\ref{sec:sfft_gap}).}
\label{tab:domains}
\resizebox{\columnwidth}{!}{%
\begin{tabular}{@{}lllcl@{}}
\toprule
Domain & Classical & Best classical & $r$ & Fourier-NC$^\dagger$ \\
\midrule
Traffic NC & NP-hard & heuristics & 1--3 & Poly (oracle sep.) \\
Railway PESP & NP-complete & IP solvers & 2--5 & Poly (oracle sep.) \\
TDMA slot & NP-hard & heuristics & 2--4 & Poly (oracle sep.) \\
FAP & NP-hard & greedy & 2--4 & Poly (oracle sep.) \\
Clock sync & NP-hard & consensus & 1--2 & Poly (oracle sep.) \\
OPF & NP-hard & SDP relax & 1 & Poly (oracle sep.) \\
Kuramoto & NP-hard & gradient flow & 1 & Poly (oracle sep.) \\
Influence sched. & NP-hard & iterative & 1--2 & Poly (oracle sep.) \\
\bottomrule
\multicolumn{5}{@{}l}{\footnotesize $^\dagger$Bipartite/frustration-free instances; Sec.~\ref{sec:complexity_dichotomy}.}
\end{tabular}}
\end{table}

\paragraph{Quantum approaches to combinatorial optimisation.}
Existing quantum paradigms---Grover search~\cite{Grover1996,Grover1997}, QAOA~\cite{Farhi2014}, quantum annealing~\cite{AlbashLidar2018}, and quantum walk algorithms~\cite{Childs2003}---treat the cost function as a black box, achieving at most quadratic speedup for generic NP-hard instances. Quantum walks exploit graph topology for search and sampling but do not leverage the \emph{algebraic} Fourier sparsity of cost functions, which is the structural feature Fourier-NC exploits. The QAOA cost operator $e^{-i\gamma H}$ is structurally identical to our phase oracle, but QAOA provides only approximate solutions via variational optimisation, whereas Fourier-NC achieves exact recovery in polynomial time by exploiting Fourier sparsity.

\paragraph{The Fourier structure gap.}
Real-world coordination costs have low Fourier sparsity ($r=1$--$5$), making the $C^n$-dimensional landscape effectively low-rank with $O(mr)$ degrees of freedom. The QFT collapses the search from $C^n$ to $O(mr)$ modes---the first network coordination problem where Fourier sparsity (rather than number-theoretic periodicity or a black-box promise) yields a quantum oracle-model separation analogous to HHL~\cite{HHL2009}. For $r$-sparse abelian instances, classical sFFT closes the gap (Sec.~\ref{sec:sfft_gap}); the $\Sk$ extension recovers a genuine conditional speedup.

\section{The Fourier-NC Framework}
\label{sec:framework}

The central idea is simple: each node in a network must choose a setting (a time slot, a phase angle, a channel) from a finite set of $C$~options arranged in a cycle. Neighbouring nodes pay a cost that depends only on the \emph{difference} between their settings. The question is: what assignment minimises the total cost?

\paragraph{Notation.}
Let $G=(V,\mathcal{A})$ denote a network with $n=|V|$ nodes and $m=|\mathcal{A}|$ edges. $\ZC=\mathbb{Z}/C\mathbb{Z}$ is the cyclic group of order~$C$; $\DC$ the dihedral group of order~$2C$; $\Sk$ the symmetric group on $k$~elements. $\hat{f}(k)$ denotes the Fourier coefficient of~$f$ at frequency~$k$, and $r$ is the Fourier sparsity (number of non-zero coefficients). For~$\Sk$, $\lambda\vdash k$ indexes irreducible representations with dimension~$d_\lambda$; $p(k)$ is the number of partitions of~$k$; $\chi^\lambda$ is the character of irrep~$\lambda$.

For each edge $(i,j)\in\mathcal{A}$, a cost function $f_{ij}\colon\ZC\to\mathbb{R}$ maps offset differences to costs. The global cost is $H(\mu)=\sum_{(i,j)\in\mathcal{A}}f_{ij}\!\bigl((\mu_i-\mu_j)\bmod C\bigr)$. The \textbf{Fourier-NC problem asks}: $\minimize H(\mu)$ over~$\ZCn$.

\textbf{Decision version (Fourier-NC Decision).} Given a threshold $\tau$, decide whether there exists $\mu\in\ZCn$ with $H(\mu)\le\tau$. Since a candidate $\mu$ can be verified in polynomial time, Fourier-NC Decision is in NP; combined with Theorem~\ref{thm:nphard}, it is NP-complete.

The key observation is that in all eight application domains (Sec.~\ref{sec:litreview}), the cost functions are \emph{smooth and periodic}, so their Fourier transforms are \emph{sparse}---only a handful of frequency components matter. This is what makes the problem tractable for a quantum computer.

\medskip

\begin{definition}[$r$-Sparse Delay Function]
A cost function $f_{ij}$ is $r$-Fourier-sparse if its DFT has at most $r$ non-zero coefficients, i.e.\
\begin{equation}
\hat f_{ij}(k) = \frac{1}{C}\sum_{x=0}^{C-1} f_{ij}(x)\cdot\omega_C^{-kx},
\label{eq:dft_edge}
\end{equation}
where $\omega_C=e^{2\pi i/C}$, and $|\{k:\hat f_{ij}(k)\neq0\}|\le r$.
\end{definition}

Three conditions characterise Fourier-NC applicability: \textbf{(C1)}~\emph{Group domain}---variables in $G^n$ ($G=\ZC,\DC,\Sk$); \textbf{(C2)}~\emph{Pairwise relative structure}---cost $H=\sum f_{ij}(g_i^{-1}g_j)$, so the transform factorises edge-by-edge; \textbf{(C3)}~\emph{Fourier sparsity}---each $f_{ij}$ has $r\ll|G|$ non-zero coefficients. All eight domains in Sec.~\ref{sec:litreview} satisfy (C1)--(C3); when any condition fails, only generic algorithms (Grover, QAOA) apply.

\subsection{Fourier Factorization}

\begin{theorem}[Fourier Factorization]
\label{thm:fourier_fact}
If each $f_{ij}$ is $r$-sparse, then the DFT of $H$ over $\ZCn$ has at most $O(mr)$ non-zero coefficients.
\end{theorem}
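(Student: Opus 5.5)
The plan is to use linearity of the multidimensional DFT over $\ZCn$ together with the fact that each edge term depends on only one linear combination of coordinates. Write $H=\sum_{(i,j)\in\mathcal{A}} F_{ij}$ with $F_{ij}(\mu)=f_{ij}((\mu_i-\mu_j)\bmod C)$. The DFT on $\ZCn$ (normalised by $C^{-n}$, with characters $\mu\mapsto\omega_C^{\langle \mathbf{k},\mu\rangle}$) is linear. Hence $\operatorname{supp}\hat H\subseteq\bigcup_{(i,j)}\operatorname{supp}\hat F_{ij}$, and it suffices to show that each $\hat F_{ij}$ has at most $r$ non-zero coefficients.

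For a single edge, expand $f_{ij}$ by Fourier inversion:
\begin{equation*}
f_{ij}(x)=\sum_{k\in S_{ij}}\hat f_{ij}(k)\,\omega_C^{kx}, \qquad |S_{ij}|\le r,
\end{equation*}
which holds with the normalisation in \eqref{eq:dft_edge}. Substituting $x=\mu_i-\mu_j$ and using that $\omega_C^{k(\mu_i-\mu_j)}$ is well defined mod $C$ gives
\begin{equation*}
F_{ij}(\mu)=\sum_{k\in S_{ij}}\hat f_{ij}(k)\,\omega_C^{\langle k(e_i-e_j),\,\mu\rangle}.
\end{equation*}
This exhibits $F_{ij}$ as a combination of at most $r$ characters of $\ZCn$, indexed by the frequency vectors $\mathbf{k}=k e_i-k e_j$. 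These are the anti-diagonal modes of Fig.~\ref{fig:fourier_spectrum}.

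The characters of $\ZCn$ are orthonormal, so the expansion of $F_{ij}$ is unique. Therefore $\hat F_{ij}(\mathbf{k})=\hat f_{ij}(k)$ when $\mathbf{k}=k(e_i-e_j)$, and $\hat F_{ij}(\mathbf{k})=0$ otherwise. Distinct $k\in\ZC$ give distinct vectors $k(e_i-e_j)$, because the $i$-th coordinate recovers $k$. Summing over edges, $|\operatorname{supp}\hat H|\le\sum_{(i,j)}|S_{ij}|\le mr$, which is $O(mr)$.

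The argument contains no real obstacle. The only care needed is bookkeeping:
\begin{itemize}
\item the normalisation conventions of the one-dimensional and $n$-dimensional DFTs must agree;
\item the reduction mod $C$ must be compatible with the characters;
\item cancellation between edges can only shrink the support, and coincident modes (parallel edges, or $k=0$ terms, which all map to $\mathbf{k}=0$) are already covered by the union bound.
\end{itemize}
Cancellation and coincidences are why the bound is stated as ``at most'' rather than as an exact count.
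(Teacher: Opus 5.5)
Your proof is correct and follows the paper's argument: both split $H$ edge-by-edge by linearity, show that each edge term is supported on the at most $r$ anti-diagonal frequency vectors $k(e_i-e_j)$ with coefficient $\hat f_{ij}(k)$, and finish with a union bound giving at most $mr$ modes. The difference is only in how the per-edge spectrum is obtained. You use Fourier inversion plus uniqueness of the character expansion, while the paper evaluates $\hat H_{ij}(\mathbf{k})$ directly by summing out the uninvolved coordinates and substituting $d=\mu_i-\mu_j$, $t=\mu_j$; this is a cosmetic rather than substantive distinction.
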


\begin{proof}[Proof sketch (full proof in Appendix~\ref{app:fourier_fact_proof})]
By character orthogonality on~$\ZC$, the single-edge DFT $\hat H_{ij}(\mathbf{k})$ vanishes unless $k_\ell=0$ for all $\ell\neq i,j$ and $k_j\equiv -k_i\pmod{C}$, reducing to $\hat H_{ij}(\mathbf{k})=\hat f_{ij}(k_i)$. Since each $f_{ij}$ has at most $r$ non-zero coefficients, each edge contributes at most $r$ non-zero frequency vectors. Summing over $m$ edges gives at most $mr$ potentially non-zero modes.
\end{proof}

\subsection{Frustration-freeness}

\begin{definition}[Frustration-free and frustrated graph]\label{def:frustration}
For each edge, let $D^*_{ij}=\arg\min_d f_{ij}(d)$. The graph~$G$ is \emph{frustration-free} if there exists a selection $d^*_{ij}\in D^*_{ij}$ for every edge and an assignment $\mu\in\ZCn$ with $\mu_i-\mu_j\equiv d^*_{ij}\pmod{C}$ for every edge; equivalently, the \emph{holonomy} $h_\gamma=\sum_{(i,j)\in\gamma}d^*_{ij}\bmod C$ vanishes for every cycle~$\gamma$. Otherwise $G$ is \emph{frustrated}.
\end{definition}

For frustration-free graphs the tree solver (Sec.~\ref{sec:algorithm}) is exact; for frustrated graphs the suboptimality gap is bounded by the cycle rank $\beta=m-n+1$ (Proposition~\ref{prop:frustration_gap}, Appendix~\ref{app:frustration}). Bipartite graphs are always frustration-free (holonomy vanishes on even cycles).

\section{The Quantum Algorithm}
\label{sec:algorithm}

The algorithm encodes the cost function into quantum phases, then applies the QFT to extract the $O(mr)$ dominant frequencies (Theorem~\ref{thm:fourier_fact}). We distinguish \emph{query complexity} (oracle calls) from \emph{gate complexity} (elementary gates); separation results use query complexity, while gate counts (Sec.~\ref{sec:gatecounts}) differ by $O(\log^2 C)$ per query.

\subsection{Algorithm Description}

\textbf{Algorithm 1: Fourier-NC Solver}

\textbf{Input:} Graph $G=(V,\mathcal{A})$, cycle length $C$, edge-level oracle access to cost functions $f_{ij}\colon\ZC\to\mathbb{R}$ (or equivalently, their Fourier coefficients $\hat f_{ij}(k)$, which can be computed classically in $O(C\log C)$ per edge).

\textbf{Output:} Optimal offset vector $\mu^*$.

\textbf{Step~1.} Allocate $n$ quantum registers, each of $\lceil\log_2 C\rceil$ qubits. Total: $n\cdot\lceil\log_2 C\rceil$ qubits.

\textbf{Step~2.} Prepare uniform superposition: $|{+}\rangle^{\otimes n}=(1/\sqrt{C^n})\sum_\mu|\mu\rangle$.

\textbf{Step~3.} Apply phase oracle: $U_H|\mu\rangle=e^{2\pi i\,H(\mu)/K}|\mu\rangle$, where $K=P\cdot m\cdot\max_{(i,j)}\|f_{ij}\|_\infty$ with $P=\mathrm{poly}(n,m,r)$ ensures all phases satisfy $H(\mu)/K\le 1/P$, placing the oracle in the \emph{linearisation regime} (see Fig.~\ref{fig:circuit}a for the high-level circuit). In the edge-level implementation (Sec.~\ref{sec:oracle}), each edge oracle $U_{ij}$ contributes a phase $e^{2\pi i\,f_{ij}(\mu_i-\mu_j)/K}$, so $U_H=\prod_{(i,j)}U_{ij}$.

\textbf{Step~4.} Apply inverse QFT over~$\ZCn$.

\textbf{Step~5.} Measure to obtain frequency vector~$\mathbf{k}^*$.

\textbf{Step~6.} Classical post-processing: solve system of linear congruences over $\ZC$ to reconstruct $\mu^*$ from Fourier modes.

\begin{figure}[!htbp]
\centering
\includegraphics[width=\columnwidth]{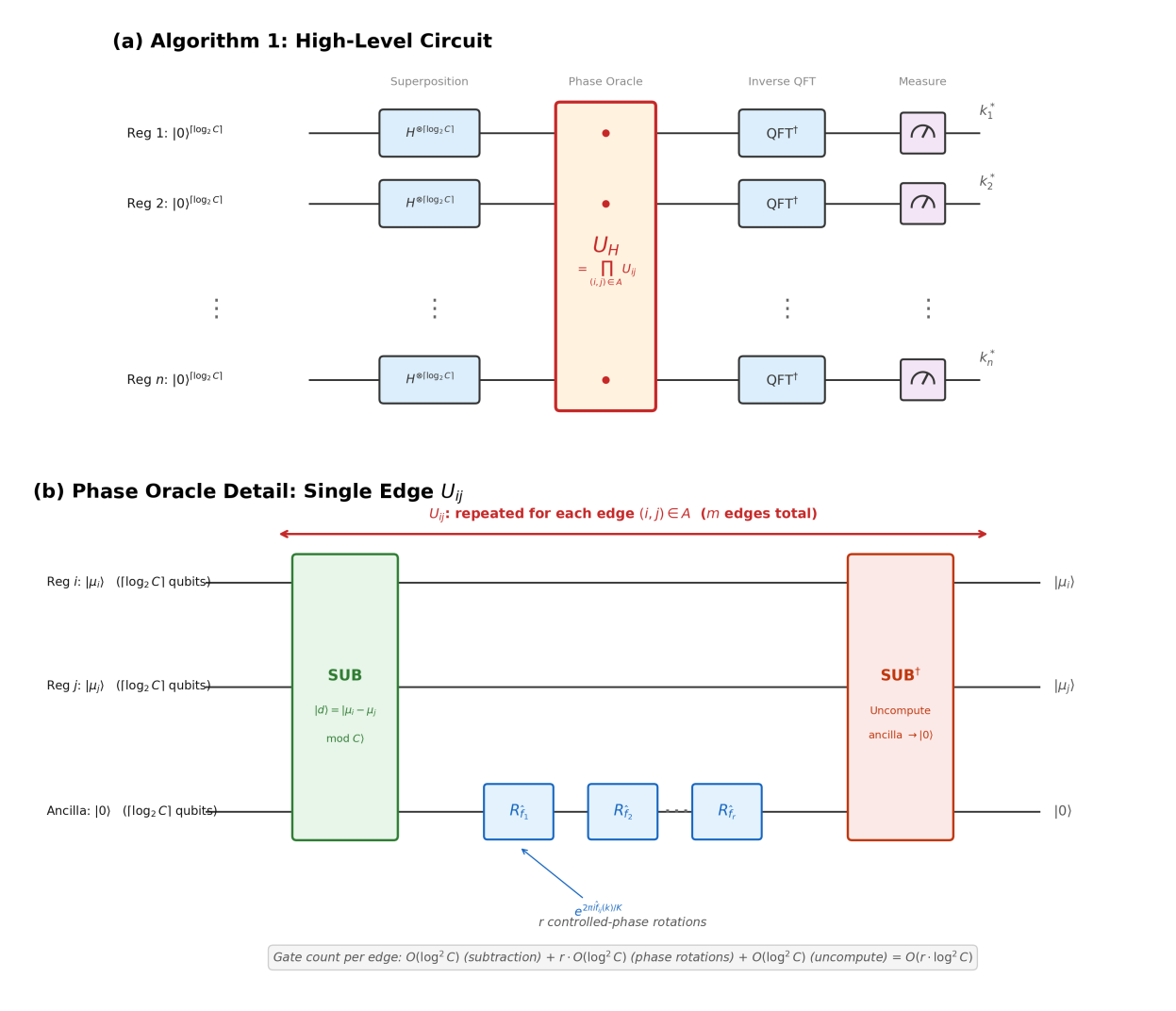}
\caption{Quantum circuit for Algorithm~1. (a)~High-level circuit: each of the $n$ vertex registers is prepared in a uniform superposition via Hadamard gates, the phase oracle $U_H$ encodes all edge constraints as a product of pairwise unitaries $U_{ij}$, and an inverse QFT extracts the Fourier-mode label $\mathbf{k}^*$. (b)~Phase oracle detail for a single edge $(i,j)$: a modular subtraction gate computes $d=\mu_i-\mu_j\bmod C$ on an ancilla register, $r$ controlled-phase rotations imprint the Fourier coefficients of the edge potential, and the subtraction is uncomputed. Gate complexity per edge is $O(r\log^2 C)$.}
\label{fig:circuit}
\end{figure}

\subsection{Phase Oracle Construction}
\label{sec:oracle}

The phase oracle $U_H=\prod_{(i,j)}U_{ij}$ factorises over edges. Each $U_{ij}$ computes $d=(\mu_i-\mu_j)\bmod C$ via a quantum subtraction circuit ($O(\log^2 C)$ gates per edge), then applies $r$ controlled-phase rotations using the Fourier synthesis $f_{ij}(d)=\sum_{k\in S_{ij}}\hat{f}_{ij}(k)\omega_C^{kd}$, with $|S_{ij}|\le r$ ($O(mr\log^2 C)$ gates total). The inverse QFT decomposes as $n$ independent QFTs over $\ZC$ ($O(n\log^2 C)$ gates). Total: $O\!\bigl((mr+n)\cdot\log^2 C\bigr)$ gates per repetition; see Fig.~\ref{fig:circuit}(b).

\subsection{Correctness}

\begin{theorem}
\label{thm:correctness}
Let $p_{\min}=\min\{|\hat H(\mathbf{k})|^2/\|\hat H\|^2:\hat H(\mathbf{k})\neq0\}$ denote the minimum non-zero Fourier weight. Algorithm~1 outputs the optimal offset with probability $\ge1-\delta$ using $T=O(P^2/p_{\min}\cdot\log(mr/\delta))$ raw measurements, where $P=\mathrm{poly}(n,m,r)$ is the linearisation parameter (Step~3). If additionally $p_{\min}\ge1/(mr\cdot n)$ (a condition that can be verified by classical preprocessing; see Remark below), the total gate complexity is $O\!\bigl(\mathrm{poly}(n,m,r,\log C,\log(1/\delta))\bigr)$.
\end{theorem}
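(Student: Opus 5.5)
The plan is to analyse the state produced by Steps~2--4 in the linearisation regime, to show that measurement samples the Fourier support of $H$ with probabilities proportional to the Fourier weights, and then to collect the whole support by a coupon-collector argument.

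\textbf{Step 1: output distribution.} First I will expand the oracle phase. Write $\theta(\mu)=2\pi H(\mu)/K$. By the choice of $K$ in Step~3, $|\theta(\mu)|\le 2\pi/P$, so $e^{i\theta}=1+i\theta+O(P^{-2})$ uniformly in $\mu$. After the inverse QFT over $\ZCn$, the amplitude on frequency $\mathbf{k}$ is $\delta_{\mathbf{k},0}+ (2\pi i/K)\,C^{n/2}\hat H(\mathbf{k})+E(\mathbf{k})$. The error satisfies $\|E\|_2=O(P^{-2})$ by Parseval.

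Conditioning on $\mathbf{k}\neq 0$ (equivalently, subtracting the mean of $H$, which only shifts $\hat H(0)$), the probability of $\mathbf{k}$ is proportional to $|\hat H(\mathbf{k})|^2$ up to relative corrections of order $P^{-1}$. A non-zero outcome occurs with probability $\Theta(\|H-\bar H\|^2/K^2)$. By the choice $K=Pm\max\|f_{ij}\|_\infty$, this probability is at least $\Omega(P^{-2})$ times a normalisation that I will bound using the factorised structure. This is the source of the $P^2$ factor in $T$.

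\textbf{Step 2: sampling the support.} Theorem~\ref{thm:fourier_fact} says the support of $\hat H$ has at most $mr$ elements, each of the form $k_i\mathbf{e}_i-k_i\mathbf{e}_j$. Each support element is hit with probability at least $p_{\min}/2$ per non-trivial sample. By a union bound over the $mr$ modes, $O(p_{\min}^{-1}\log(mr/\delta))$ non-trivial samples reveal the entire support with probability at least $1-\delta/2$. Multiplying by the $O(P^2)$ raw repetitions per non-trivial sample gives the claimed $T$.

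Spurious frequencies coming from $E$ have total weight $O(P^{-2})$. I will filter them out by keeping only modes observed with empirical frequency above a threshold $\Theta(p_{\min})$, applying a Chernoff bound.

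\textbf{Step 3: post-processing and complexity.} From the recovered modes, edge by edge, I recover the coefficients $\hat f_{ij}$, or simply use the given edge oracles. I then reconstruct $\mu^*$ by the tree solver: fix $\mu_{\text{root}}=0$, propagate the congruences $\mu_i-\mu_j\equiv d^*_{ij}\pmod C$ along a spanning tree, and check consistency on non-tree edges. Under frustration-freeness (Definition~\ref{def:frustration}) the holonomies vanish, so the solution is consistent and attains $\sum_{ij}\min f_{ij}$, which is the optimum. The congruence system is solved in $\mathrm{poly}(n,\log C)$ time.

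For the gate bound, each repetition costs $O((mr+n)\log^2 C)$ gates (Sec.~\ref{sec:oracle}). Substituting $p_{\min}\ge 1/(mrn)$ gives $T=\mathrm{poly}(n,m,r)\log(1/\delta)$, hence polynomial total complexity.

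\textbf{Main obstacle.} The delicate step is Step~1. The linearised output probabilities must faithfully reflect the relative weights $p_{\min}$, and this needs two controls. First, the $O(P^{-2})$ second-order terms must not swamp the signal amplitude, which is $O(P^{-1})\cdot\|\hat H\|$ relative to the normalisation. Second, the dominant $\mathbf{k}=0$ mass must not make non-trivial samples too rare.

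To handle this, I would first subtract the mean of $H$ inside the oracle. I would then take $P$ large enough that the second-order error, of size $O(P^{-2})$, is below $p_{\min}\cdot\|H-\bar H\|^2/K^2$. If that fails, the fallback is to use the given edge-oracle access to read off the $\hat f_{ij}$ directly. In that case the quantum sampling serves only to certify the support.
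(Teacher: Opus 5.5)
\textbf{Verdict.} Your proposal is correct up to the same bookkeeping caveats as the paper's own proof. It follows the paper's three-stage skeleton: linearise $e^{2\pi iH/K}$, sample non-zero modes with weights $\propto|\hat H(\mathbf{k})|^2$, coupon-collect the $\le mr$ support elements with a union bound, pay $O(P^2)$ raw shots per non-trivial one, then reconstruct along a spanning tree and multiply by the per-repetition gate count. The reconstruction step, however, rests on a different lemma.

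\textbf{The reconstruction step.} The paper turns each measured mode into a congruence $k_i(\mu^*_i-\mu^*_j)\equiv\varphi_{ij}\pmod C$ with $\varphi_{ij}=\arg\hat f_{ij}(k_i)$, and resolves $\gcd(k_i,C)>1$ by CRT. You instead use the certified support plus edge queries to get $f_{ij}$ ($r$ evaluations per edge suffice once the support is known). You then take $d^*_{ij}=\arg\min f_{ij}$ and propagate $\mu_i-\mu_j\equiv d^*_{ij}$, with frustration-freeness guaranteeing consistency on the non-tree edges.

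Your route is the one that actually certifies optimality for general $r$-sparse costs. The phase of a single coefficient locates the minimiser only for a single cosine harmonic, and even then only after converting the angle to a residue. For multi-harmonic costs, different modes of one edge can give incompatible congruences.

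Your route also makes explicit a hypothesis the statement omits but cannot do without. At $C=2$, $r=1$, MAX-CUT instances have $p_{\min}=1/m\ge1/(mrn)$, and the paper's own discussion of the structural complexity boundary concedes success only on bipartite graphs. The paper's Stage~2 uses frustration-freeness tacitly, by ignoring non-tree edges.

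The price is twofold. Your output depends on the quantum samples only through support certification. And computing $\arg\min f_{ij}$ by enumeration costs $O(C)$ per edge, so the $\mathrm{poly}(\log C)$ claim then covers only the quantum gates.

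\textbf{Bookkeeping to repair.}
\begin{itemize}
\item With the paper's $C^{-n}$ normalisation, the amplitude on $\mathbf{k}$ is $\delta_{\mathbf{k},\mathbf{0}}+(2\pi i/K)\hat H(\mathbf{k})+\hat E(\mathbf{k})$, with no factor $C^{n/2}$.
\item In your ``main obstacle'', compare amplitude against amplitude: $O(P^{-2})\ll(2\pi/K)|\hat H(\mathbf{k})|$. Comparing the $O(P^{-2})$ amplitude error with the probability $p_{\min}\|H-\bar H\|^2/K^2$, which is itself of order $P^{-2}$, cannot be won by enlarging $P$.
\item The normalisation you defer is not $\Theta(1)$. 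A non-trivial outcome has probability $\approx4\pi^2\|\hat H_{\neq\mathbf{0}}\|^2/K^2$, so $T$ carries an extra factor $m^2\max\|f_{ij}\|_\infty^2/\|\hat H_{\neq\mathbf{0}}\|^2$. This factor is already $2m$ for identical cosines, and the paper's stated $T$ drops it silently as well.
\end{itemize}

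Your mean subtraction is the right remedy if you do it per edge and define $K$ from the centred $f_{ij}$. This stops large edge offsets that cancel in $\hat H(\mathbf{0})$ from inflating $K$, and caps the extra factor at $m^2r$. The polynomial gate bound therefore survives, although the literal form of $T$ does not.
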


\begin{proof}[Proof sketch (full proof in Appendix~\ref{app:correctness_proof})]
The proof proceeds in three stages. \emph{Stage~1 (Linearisation):} Setting $K=P\cdot m\cdot\max\|f_{ij}\|_\infty$ with $P\gg1$ ensures $g(\mu)=e^{2\pi iH(\mu)/K}\approx 1+(2\pi i/K)H(\mu)$, so the DFT of $g$ recovers $\hat{H}(\mathbf{k})$ up to negligible error. Conditional on measuring $\mathbf{k}\neq\mathbf{0}$, the distribution is proportional to $|\hat{H}(\mathbf{k})|^2$. \emph{Stage~2 (Linear system):} Each measured non-zero mode yields a linear congruence $k_i(\mu^*_i-\mu^*_j)\equiv\varphi_{ij}\pmod{C}$; edge-level oracle access resolves overlapping modes. Collecting congruences along a spanning tree determines all $n-1$ offset differences via the Chinese Remainder Theorem. \emph{Stage~3 (Repetition bound):} By a coupon-collector argument, $T=O(P^2/p_{\min}\cdot\log(mr/\delta))$ raw measurements suffice to collect all $s\le mr$ non-zero modes. Under $p_{\min}\ge 1/(mr\cdot n)$, this remains $O(\mathrm{poly})$. Convergence is shown in Fig.~\ref{fig:convergence}.
\end{proof}

\paragraph{Remark (on the $p_{\min}$ condition).} The condition $p_{\min}\ge1/(mr\cdot n)$ is a \emph{verifiable hypothesis}: a classical preprocessing step computes $p_{\min}$ exactly from the $O(mr)$ edge-level DFT coefficients in $O(mr\cdot C\log C)$ time. The sufficient condition is that the per-edge Fourier dynamic range $\kappa=\max_k|\hat f_{ij}(k)|/\min_{k:\hat f_{ij}(k)\neq0}|\hat f_{ij}(k)|$ be polynomially bounded; we prove $p_{\min}\ge 1/\mathrm{poly}(n,m,r)$ for cosine coupling (Proposition~\ref{prop:pmin_cosine}; covers traffic NC, OPF, Kuramoto, clock sync) and piecewise-linear costs (Proposition~\ref{prop:pmin_pwl}; covers railway PESP, TDMA/FAP, influence scheduling) in Appendix~\ref{app:pmin}. Numerical validation (Appendix~\ref{app:validation}) confirms that $p_{\min}$ exceeds the polynomial threshold by $5$--$29\times$ across all tested topologies.

\begin{figure}[!htbp]
\centering
\includegraphics[width=\columnwidth]{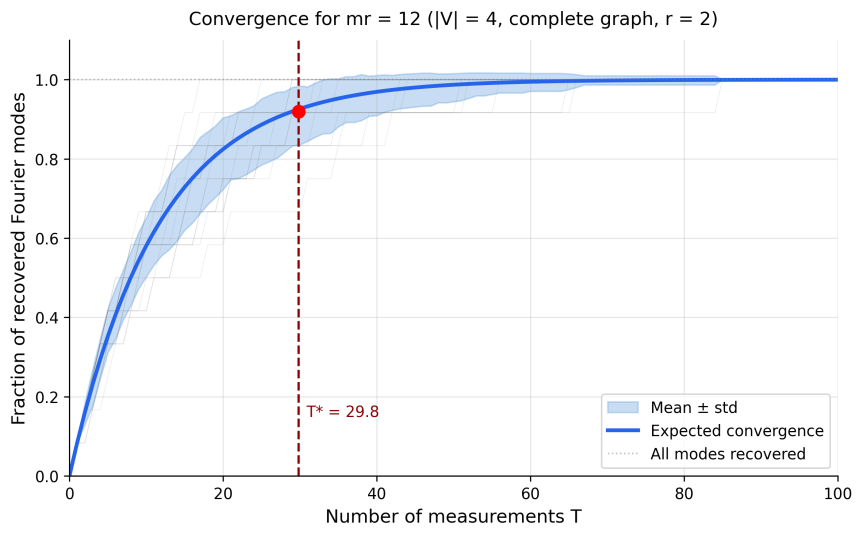}
\caption{Convergence: fraction of distinct Fourier modes recovered versus raw measurement count $T$ for the $|V|=4$ instance ($mr=12$, complete graph, $r=2$). Blue line: expected convergence; shaded band: $\pm1$ standard deviation over 50 Monte Carlo trials; grey traces: individual trials. Dashed line: theoretical threshold $T^*=mr\cdot\ln(mr)\approx29.8$.}
\label{fig:convergence}
\end{figure}

\paragraph{Oracle models.}
\label{sec:oracle_models}
An ``oracle model'' specifies what the algorithm is allowed to ask: in the \emph{global} model, it evaluates the total cost $H(\mu)$ for a complete assignment (like running a full traffic simulation); in the \emph{edge-level} model, it queries individual link costs $f_{ij}(d)$ (like measuring delay on a single road link). Whether edge-level access is practical depends on the domain: railway timetable penalties and power-flow sinusoids are given analytically, but traffic delay functions require costly per-link field experiments. For $\Sk$, even edge-level access requires evaluating functions on a domain of size $k!$, so the quantum advantage persists regardless of the oracle model.

\section{Complexity Analysis}
\label{sec:complexity}

\subsection{NP-Hardness}

\begin{theorem}
\label{thm:nphard}
The Fourier-NC optimisation problem (minimisation) is NP-hard, and the Fourier-NC Decision problem is NP-complete, even for $r=1$ and $C=2$.
\end{theorem}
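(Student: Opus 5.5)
The plan is a reduction from MAX-CUT, which is NP-complete in its decision form, to Fourier-NC with $C=2$ and $r=1$. Take a MAX-CUT instance: a graph $(V,E)$ and a target $t$, asking whether some cut has at least $t$ edges. We use the same graph as the Fourier-NC network, with $n=|V|$ and $m=|E|$. Each node carries a variable $\mu_i\in\mathbb{Z}_2$, read as the side of the cut. On every edge we place the same cost
\[
f_{ij}(d)=\tfrac12\bigl(1+(-1)^d\bigr),
\]
so that $f_{ij}(0)=1$ and $f_{ij}(1)=0$. An edge therefore pays $1$ exactly when it is uncut, and $H(\mu)$ equals $m$ minus the number of edges cut by $\mu$.

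Sparsity needs a short remark. Under the DFT~\eqref{eq:dft_edge} with $C=2$, this $f_{ij}$ has coefficients $\hat f_{ij}(0)=\hat f_{ij}(1)=\tfrac12$, which is two non-zero coefficients. To get strictly $r=1$, I would shift the cost by a constant and use $f_{ij}(d)=-(-1)^d$ (equivalently $\tfrac12(-1)^d$ up to scaling). This has the single non-zero coefficient $\hat f_{ij}(1)=-1$. The shift changes $H$ by the global constant $m/2$ (after rescaling), so it leaves the set of minimisers unchanged, and we adjust the threshold to match. Concretely, with $f_{ij}(d)=-(-1)^d$ we get $H(\mu)=2\,\mathrm{cut}(\mu)-m$. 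Hence
\[
\exists\,\mu:\ H(\mu)\le \tau \quad\text{with}\quad \tau=m-2t
\]
holds if and only if some cut has size at least $t$. This is the decision equivalence.

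Next comes the bookkeeping. The reduction runs in polynomial time: it copies the graph and writes down constant-size cost tables and coefficients. Each $f_{ij}$ depends only on $(\mu_i-\mu_j)\bmod 2$, so condition (C2) holds. Any $\mu$ is a polynomial-size certificate, and $H(\mu)$ is computed in $O(m)$ time, so membership in NP follows as already noted after the decision version. NP-hardness of the minimisation problem follows because an optimal $\mu^*$ yields the maximum cut, since $\mathrm{cut}(\mu^*)=(H(\mu^*)+m)/2$.

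The only real subtlety is the sparsity convention: whether the constant (zero-frequency) term counts toward $r$. I expect this to be the main obstacle, and handling it through the constant-shift argument above is what lets the claim "$r=1$" hold literally. I would also add a remark that this family is generally frustrated, since odd cycles carry nonzero holonomy. This is consistent with Definition~\ref{def:frustration} and with the algorithm's exactness being restricted to frustration-free instances.
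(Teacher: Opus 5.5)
Your reduction is the paper's own: MAX-CUT with $C=2$ and a single-harmonic, zero-mean edge cost. As written, however, it contains a sign error that breaks it.

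With $f_{ij}(d)=-(-1)^d$ you have $f_{ij}(0)=-1$ and $f_{ij}(1)=+1$. So uncut edges are rewarded and cut edges penalised. You correctly compute $H(\mu)=2\,\mathrm{cut}(\mu)-m$, but this is minimised by the all-equal assignment, i.e.\ the empty cut.

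Both of your downstream claims then fail:
\begin{itemize}
\item The condition $H(\mu)\le m-2t$ is equivalent to $\mathrm{cut}(\mu)\le m-t$, not $\mathrm{cut}(\mu)\ge t$. It is satisfied by the empty cut for every $t\le m$. For example, on a triangle with $t=3$ the all-equal assignment gives $H=-3\le\tau=-3$, although no cut of size $3$ exists.
\item Likewise $\mathrm{cut}(\mu^*)=(H(\mu^*)+m)/2=0$, so the minimiser yields the minimum cut, not the maximum.
\end{itemize}

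The slip occurs when you pass from $\tfrac12(-1)^d$ to $-(-1)^d$. That step is a rescaling by $-2$. A constant shift and a positive rescaling preserve the set of minimisers, but a negative rescaling swaps minimisers and maximisers.

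Your own frustration remark would have exposed this. With your sign, $d^*_{ij}=0$ on every edge, so every instance is frustration-free and trivially solvable. It is not frustrated on odd cycles, as you intended.

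The fix is to take $f_{ij}(d)=(-1)^d=\cos(\pi d)$. This equals $2\cdot\tfrac12\bigl(1+(-1)^d\bigr)-1$, a positive affine image of your original uncut-indicator. With this choice:
\begin{itemize}
\item Under the paper's DFT, $\hat f_{ij}(0)=0$ and $\hat f_{ij}(1)=1$, so $r=1$ holds literally.
\item $H(\mu)=m-2\,\mathrm{cut}(\mu)$.
\item $H(\mu)\le m-2t$ if and only if $\mathrm{cut}(\mu)\ge t$, so your threshold $\tau=m-2t$ was already the right one.
\item $\mathrm{cut}(\mu^*)=(m-H(\mu^*))/2$ is the maximum cut.
\item $d^*_{ij}=1$, so odd cycles carry holonomy $1$, as your remark intended.
\end{itemize}

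This is exactly the paper's construction. Once the sign is corrected, the rest of your argument goes through: the polynomial-time reduction, NP membership via $O(m)$ verification, and the observation that a zero-mean cost settles whether the constant mode counts toward $r$.
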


\begin{proof}[Proof sketch (full proof in Appendix~\ref{app:complexity_proofs})]
Reduce from MAX-CUT: set $C=2$ and $f_{ij}(x)=\cos(\pi x)$, so $f_{ij}(0)=+1$ (same side) and $f_{ij}(1)=-1$ (different sides). Then $H(\mu)=m-2\cdot\mathrm{cut}(\mu)$, and minimising~$H$ is equivalent to MAX-CUT~\cite{Garey1976}. The cost $f_{ij}$ has $r=1$, confirming hardness at minimal Fourier sparsity. NP-completeness of the decision version follows because a candidate~$\mu$ is verifiable in $O(m)$ time. Fig.~\ref{fig:maxcut} illustrates the reduction.
\end{proof}

\begin{figure}[!htbp]
\centering
\includegraphics[width=\columnwidth]{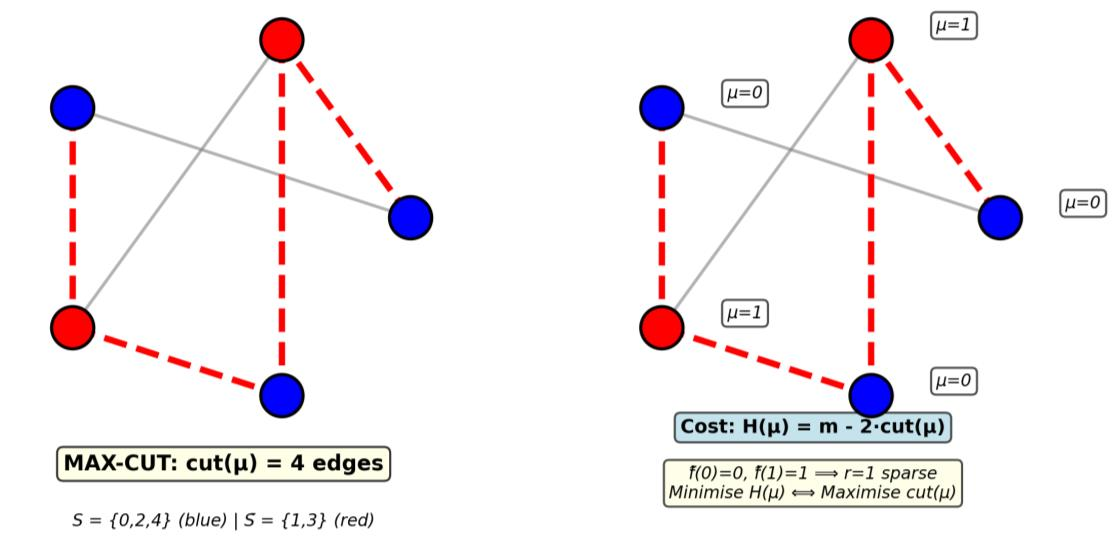}
\caption{NP-hardness reduction (Theorem~\ref{thm:nphard}). A MAX-CUT partition on a 5-node graph (left) maps directly to offset assignments $\mu\in\{0,1\}^n$ with cost $f_{ij}(\Delta\mu)=\cos(\pi\Delta\mu)$. Minimising $H(\mu)=m-2\cdot\mathrm{cut}(\mu)$ is equivalent to MAX-CUT, confirming NP-hardness at $r=1$.}
\label{fig:maxcut}
\end{figure}

Knowledge of Fourier coefficients alone does not suffice: NP-completeness holds even at $r=1$. On frustration-free graphs, the spanning-tree congruences are solvable in $O(n)$ time, so the quantum advantage lies in extracting these coefficients---an oracle-model separation.

\subsection{Classical Query Lower Bound}

\begin{theorem}[Classical query lower bound]
\label{thm:lower_bound}
Any classical algorithm in the global oracle model requires $\Omega(C^{n})$ queries to solve the \emph{general} (non-sparse) Fourier-NC problem. The adversary uses instances with $r=C$ (maximally non-sparse); for $r$-sparse instances over~$\ZC$, this lower bound does not apply (see Sec.~\ref{sec:sfft_gap}).
\end{theorem}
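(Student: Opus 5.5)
We argue by a standard needle-in-a-haystack adversary in the global oracle model, where a query reveals $H(\mu)$ for one complete assignment $\mu\in\ZCn$. Since $H$ depends only on relative offsets, global rotations $\mu\mapsto\mu+c\mathbf{1}$ leave it invariant. We therefore work on the quotient $\ZCn/\langle\mathbf{1}\rangle$, which has $C^{n-1}$ classes. This loses only a factor $C$, which is absorbed into $\Omega(C^n)$ for fixed $C$. Alternatively, we pin node~1 by adding a unary term (or an edge to an auxiliary pinned node), and then count all $C^n$ assignments of the remaining nodes. We fix this normalisation first.

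The hard family works as follows. Fix a spanning tree $T$ of $G$; a path suffices. For a hidden target $\nu\in\ZCn$, let each tree edge cost be $f^{\nu}_{ij}(x)=-\mathbf{1}[x=\nu_i-\nu_j]$, and set every non-tree edge cost to zero. A delta function on $\ZC$ has all $C$ Fourier coefficients non-zero, so $r=C$, matching the ``maximally non-sparse'' caveat in the statement. Then $H_\nu(\mu)=-\#\{(i,j)\in T:\mu_i-\mu_j=\nu_i-\nu_j\}$, which is minimised exactly on the rotation class of $\nu$.

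There is a problem with this construction. It leaks partial information, because $H_\nu(\mu)$ counts the number of agreeing edges, and an adaptive algorithm could then learn $\nu$ coordinate by coordinate in $O(nC)$ queries. This is the main obstacle. To remove the leakage, we make the cost genuinely non-decomposable across the tree edges. Since (C2) only requires a sum of pairwise terms, we replace the count with an indicator that is constant off the target. A clean way to do this on a path is to choose weights so that $H_\nu$ takes a value $\le\tau$ only when all tree edges agree. For example, we take $f^\nu_{ij}(x)=0$ if $x=\nu_i-\nu_j$ and $1$ otherwise, and we let the adversary answer every query by saying only whether the value is $0$. In the decision formulation with $\tau=0$, every non-target query returns ``no'', and that answer is consistent with every still-unqueried $\nu$. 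If the optimisation version must return actual values, then the numeric values do leak agreement counts. In that case we take the answers to be \emph{thresholded}: the optimisation oracle reports only $\min(H,1)$. Equivalently, we phrase the bound for the Fourier-NC Decision problem, which is NP-complete by Theorem~\ref{thm:nphard}, and inherit it for optimisation. I expect the written proof to need exactly this restriction, and I would state it explicitly.

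The adversary argument then runs as follows. The adversary keeps the set $S$ of targets consistent with all answers given so far. Initially $|S|=C^{n}$ after pinning. Each query $\mu$ eliminates at most one element, namely $\nu=\mu$, and otherwise receives the answer ``$>\tau$''. If the algorithm halts after $q<|S|-1$ queries, two targets remain consistent, and these have different unique minimisers, so the algorithm errs on one of them. Hence $q\ge C^n-1$ deterministically. For randomised algorithms, Yao's principle with $\nu$ uniform gives expected $\Omega(C^n)$ queries for constant success probability. Finally, we note that the construction needs $r=C$ through the delta functions. For $r$-sparse costs the indicator cannot be realised, so the bound does not transfer. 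This is consistent with the statement's pointer to Sec.~\ref{sec:sfft_gap}.
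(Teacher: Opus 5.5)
\textbf{Gap: the proposal proves a bound for a different oracle.} Your final adversary argument is sound, but the statement's global oracle returns the number $H(\mu)$. You replace it with a thresholded oracle that reveals only whether $H(\mu)\le 0$ (or only $\min(H,1)$). A lower bound against a weaker oracle does not transfer to a stronger one, so rephrasing via the Decision problem does not rescue the claim. The decision problem still has value access to $H$. Also, on your family every instance is a YES instance at $\tau=0$, since $H_\nu(\nu)=0$, so that version is trivial there unless you add a NO instance.

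More importantly, the leakage you found is not an artefact of your choice of weights, and it cannot be engineered away within (C2). Every pairwise cost lies in the known linear space spanned by the functions $\mu\mapsto\mathbf{1}[\mu_i-\mu_j=d]$, with $(i,j)\in\mathcal{A}$ and $d\in\mathbb{Z}_C$. This space has dimension at most $m(C-1)+1$. Value queries on a unisolvent set of that size therefore determine $H$ exactly, and the minimiser can then be found offline. This is the paper's own $O(m\cdot C)$ node-sweep demixing in Sec.~\ref{sec:sfft_gap}. Hence, with a value oracle, no family of genuinely pairwise instances forces $\Omega(C^n)$ queries, whatever its per-edge sparsity.

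\textbf{How the paper's proof differs.} The paper escapes this by leaving the pairwise class rather than weakening the oracle. It fixes identical cosine costs on $K_n$ and adds a global spike $\varepsilon\cdot\delta(\mu,\mu^\dagger)$, which is a function of the whole assignment. Every query $\mu\neq\mu^\dagger$ then returns a value independent of $\mu^\dagger$, and the unstructured-search bound $Q\ge C^n/2$ applies directly. The spike also breaks the rotation invariance you had to quotient out, which is why the paper gets $C^n$ rather than $C^{n-1}$.

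\textbf{How to fix your proof.} To prove the theorem in the sense the paper intends, read ``general Fourier-NC'' as permitting an arbitrary cost on $\mathbb{Z}_C^n$. Then use such a downward spike, deeper than the range of the base cost so that $\mu^\dagger$ is the unique minimiser for every choice, and keep the value oracle. If you instead insist on (C2) instances with per-edge $r=C$, as the statement's wording suggests, what you have proved is exactly the thresholded-oracle bound. It should be stated as that, not as the global-oracle theorem.
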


\begin{proof}[Proof sketch (full proof in Appendix~\ref{app:complexity_proofs})]
Construct an adversary on~$K_n$ with identical cosine costs plus a delta-function perturbation $\varepsilon\cdot\delta(\mu,\mu^\dagger)$ creating a unique minimum at random $\mu^\dagger\in\ZCn$. Any query $\mu\neq\mu^\dagger$ returns the same value regardless of~$\mu^\dagger$, so $Q\ge C^n/2$ queries are needed for success probability $>1/2$ (Fig.~\ref{fig:lower_bound}).
\end{proof}

\begin{figure}[!htbp]
\centering
\includegraphics[width=\columnwidth]{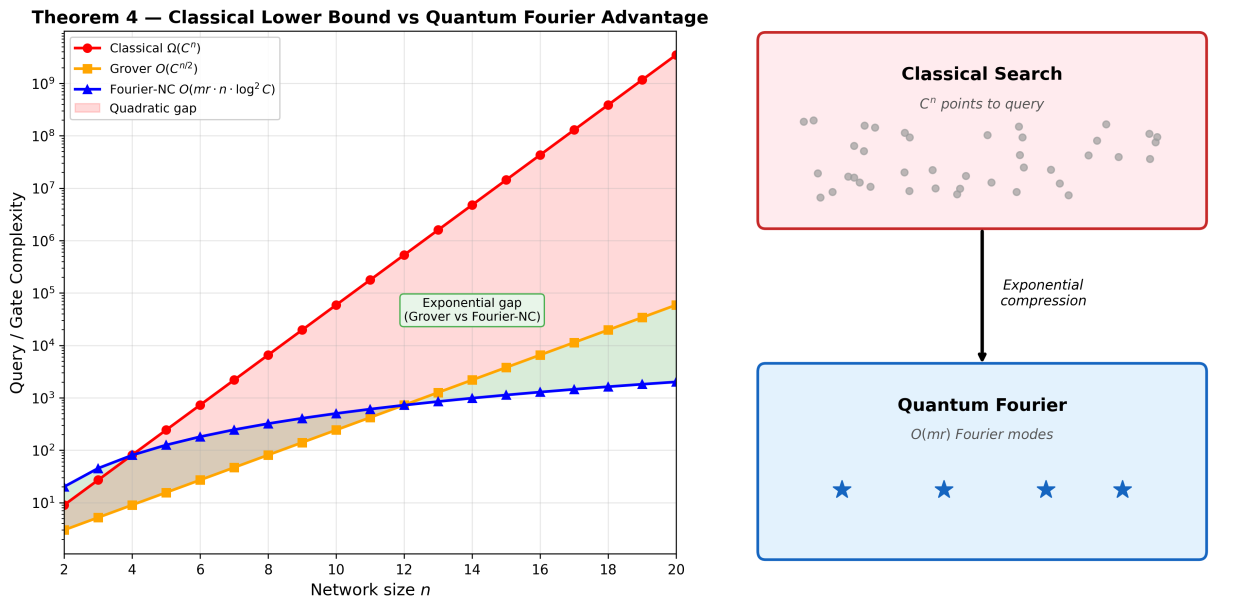}
\caption{Query complexity comparison for $C=4$. Classical algorithms require $\Omega(C^n)$ queries (red, Theorem~\ref{thm:lower_bound}). Grover achieves $O(C^{n/2})$ (orange)---still exponential. Fourier-NC achieves $O(\mathrm{poly}(n,m,r,\log C))$ (blue) by reading $O(mr)$ Fourier modes via QFT.}
\label{fig:lower_bound}
\end{figure}

\subsection{Gate Count Analysis}
\label{sec:gatecounts}

We compare gate counts for Grover and Fourier-NC using analytical formulas. Let $q=\lceil\log_2 C\rceil$. Each Fourier-NC repetition costs $G_{\mathrm{QFT}}=(mr+n)\cdot q^2$ gates, with $T=mr\cdot n\cdot\lceil\ln(mr)\rceil$ repetitions. Each Grover iteration costs $G_{\mathrm{Grover}}=m\cdot q^2$ gates, with $\lceil\pi\sqrt{C^n}/4\rceil$ iterations. Small-instance verification ($|V|\in\{3,\ldots,6\}$, $C\in\{4,8,16\}$) confirms speedups of $37$--$554\times$ per repetition. Table~\ref{tab:gates_projected} projects the scaling.

\begin{table}[!htbp]
\centering
\caption{Analytical gate count projections ($r=2$, complete graphs $m=n(n{-}1)/2$, $C=64$). Fourier-NC uses edge-level oracle; Grover uses global oracle.}
\label{tab:gates_projected}
\resizebox{\columnwidth}{!}{%
\begin{tabular}{@{}rrrrrr@{}}
\toprule
$n$ & $m$ & $C^n$ & Fourier-NC & Grover & Speedup \\
\midrule
10 & 45 & $1.2\times10^{18}$ & $1.6\times10^{7}$ & $1.4\times10^{12}$ & $8.4\times10^{4}$ \\
20 & 190 & $1.3\times10^{36}$ & $6.6\times10^{8}$ & $6.2\times10^{21}$ & $9.4\times10^{12}$ \\
50 & 1225 & $2.0\times10^{90}$ & $8.8\times10^{10}$ & $4.9\times10^{49}$ & $5.6\times10^{38}$ \\
100 & 4950 & $4.1\times10^{180}$ & $3.6\times10^{12}$ & $2.9\times10^{95}$ & $8.0\times10^{82}$ \\
\bottomrule
\end{tabular}}
\end{table}

\begin{table}[!htbp]
\centering
\caption{Complexity comparison. Fourier-NC uses edge-level oracle access; all others use the global oracle. For $r$-sparse abelian instances, classical sFFT with edge-level access matches Fourier-NC.}
\label{tab:complexity}
\resizebox{\columnwidth}{!}{%
\begin{tabular}{@{}llll@{}}
\toprule
Approach & Oracle & Complexity & Scaling \\
\midrule
Brute force & Global & $O(C^n\cdot m)$ & Exp. \\
Grover~\cite{Grover1996,DixitPech2026} & Global & $O(\sqrt{C^n}\cdot m\cdot\log^2 m)$ & Exp. \\
Classical sFFT~\cite{Hassanieh2012} & Edge & $O(m\cdot r\cdot\mathrm{polylog}\,C)$ & Poly. \\
\textbf{Fourier-NC} & \textbf{Edge} & $\mathbf{O(\mathrm{poly}(n,m,r,\log C))}$ & \textbf{Poly.} \\
\bottomrule
\end{tabular}}
\end{table}

\section{Permutation Coordination over \texorpdfstring{$\Sk$}{S\_k}}
\label{sec:symmetric}

The symmetric group $\Sk$---permutations of $k$~items---is where quantum \emph{genuinely} wins. Problems over $\Sk$ arise when agents must agree on an \emph{ordering} (ranking, scheduling, routing). Since $|\Sk|=k!$ but the largest abelian subgroup has only ${\sim}3^{k/3}$ elements, no classical abelian decomposition can work, and the QFT provides a super-exponential speedup.

\paragraph{Intuitive example: fleet route assignment.}
Consider $k$~trucks and $k$~delivery routes coordinated across $n$~depots. Each depot assigns trucks to routes---a permutation $\sigma_i\in\Sk$. Assuming a homogeneous fleet (identical capacity and type), trucks are interchangeable, so congestion between neighbouring depots depends only on the \emph{cycle type} of $\sigma_i^{-1}\sigma_j$ (how many routes stay, swap in pairs, rotate in triples), making the cost a class function. A composite cost penalising both priority violations (Kendall tau, $r=2$) and disruption (Hamming distance, $r=2$) has $r\le 4$; with unequal weights the optimal cycle type is non-trivial. For $k=15$, the $15!\approx 1.3\times10^{12}$ orderings per depot are beyond brute force, but the quantum algorithm recovers active modes in ${\sim}2.7\times10^4$ queries (Table~\ref{tab:sk_scaling}).

\subsection{The symmetric group $\Sk$}

\begin{definition}[Symmetric group]
\label{def:symmetric_group}
Every permutation in $\Sk$ ($|\Sk|=k!$) has a unique decomposition into disjoint cycles; the multiset of cycle lengths is the \emph{cycle type}, a partition of~$k$. Conjugacy classes of $\Sk$ are indexed by partitions of~$k$, with $p(k)\sim e^{\pi\sqrt{2k/3}}$ classes. The irreps of $\Sk$ are also indexed by partitions $\lambda\vdash k$; the maximum irrep dimension is $\dmax(\Sk)=\Theta(\sqrt{k!})$, super-exponential in~$k$ (see Appendix~\ref{app:irreps} for details).
\end{definition}

\subsection{Permutation Coordination problem}

\begin{definition}[Permutation Coordination]
\label{def:perm_coord}
Given a graph $G=(V,\mathcal{A})$ with $|V|=n$ agents and $|\mathcal{A}|=m$ edges, each agent~$i$ is assigned a permutation $\sigma_i\in\Sk$. Each edge $(i,j)$ has a cost function $f_{ij}\colon\Sk\to\mathbb{R}$ that depends on the relative permutation $\sigma_i^{-1}\cdot\sigma_j$. The \emph{Permutation Coordination} problem is:
\[
\minimize_{\sigma_1,\ldots,\sigma_n\in\Sk}\; H(\sigma) \;=\; \sum_{(i,j)\in\mathcal{A}} f_{ij}\!\bigl(\sigma_i^{-1}\cdot\sigma_j\bigr).
\]
A cost function $f\colon\Sk\to\mathbb{R}$ is \emph{$r$-Fourier-sparse} if at most $r$ of the $p(k)$ irreducible Fourier coefficients $\hat{f}(\lambda)$ are non-zero.
\end{definition}

This generalises Fourier-NC from cyclic ($\ZC$) and dihedral ($\DC$) groups to the symmetric group~$\Sk$. The key structural difference is that Fourier coefficients are now \emph{matrices}, not scalars.

\subsection{Cost functions}

Theorem~\ref{thm:sk_quantum} requires that edge cost functions be \emph{class functions}---constant on conjugacy classes. The two principal building blocks are: \emph{Kendall tau distance} $d_{\mathrm{KT}}$, which counts pairwise inversions and has $r=2$ (supported on trivial and standard irreps)~\cite{Kendall1938}; and \emph{Hamming distance} $d_H(\sigma)=k-\mathrm{fix}(\sigma)$, also with $r=2$. A composite cost $f(\sigma)=\sum_i w_i g_i(\sigma)$ of $\ell$ such metrics has $r\le 2\ell$. With $\ell\ge 2$ and non-uniform weights, the minimising conjugacy class depends on the weight configuration and is generally not known a priori---placing composite-cost problems in the regime where quantum advantage is genuine.\footnote{For pure Kendall tau cost ($r=2$), the identity minimiser yields $H=0$ in $O(1)$ time since $d_{\mathrm{KT}}(e)=0$ and setting $\sigma_i=\sigma$ for all~$i$ is trivially optimal; genuine advantage requires composite costs ($r\ge 4$).}

\subsection{Quantum algorithm}

The quantum algorithm follows the same Fourier-recovery structure as the abelian case (Sec.~\ref{sec:algorithm}), with the QFT over $\ZC$ replaced by the QFT over~$\Sk$.

\begin{theorem}[Quantum Permutation Coordination]
\label{thm:sk_quantum}
For the Permutation Coordination problem on $\Sk$ with $n$ agents, $m$ edges, and Fourier sparsity~$r$, where (i)~edge cost functions are class functions on~$\Sk$ and (ii)~the network graph is frustration-free, the quantum algorithm recovers all active Fourier modes and finds the optimal assignment using $O(m\cdot r\cdot\mathrm{poly}(k))$ queries to edge-level oracles.  (For non-abelian groups, frustration-freeness means that the product of edge-optimal relative permutations $d^*_{ij}$ around each cycle~$\gamma$ equals the identity in~$\Sk$, generalising the holonomy condition of Definition~\ref{def:frustration}.)
\end{theorem}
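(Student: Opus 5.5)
The plan mirrors the abelian argument (Theorems~\ref{thm:fourier_fact} and~\ref{thm:correctness}) with the QFT over $\ZC$ replaced by the QFT over $\Sk$. The class-function hypothesis (i) is what makes this work: it collapses matrix-valued Fourier data to scalar data. There are three steps.

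\textbf{Step 1: a non-abelian Fourier Factorization.} For a class function $f_{ij}$, Schur's lemma gives $\hat f_{ij}(\lambda)=c_{ij}(\lambda)\,I_{d_\lambda}$ with
\[
c_{ij}(\lambda)=\frac{1}{d_\lambda}\sum_{\sigma}f_{ij}(\sigma)\overline{\chi^\lambda(\sigma)}.
\]
Equivalently,
\[
f_{ij}(\sigma_i^{-1}\sigma_j)=\sum_{\lambda}\frac{d_\lambda}{k!}\,c_{ij}(\lambda)\,\chi^\lambda(\sigma_i^{-1}\sigma_j).
\]
Next expand $\chi^\lambda(\sigma_i^{-1}\sigma_j)=\mathrm{tr}\,\rho_\lambda(\sigma_i)^{\dagger}\rho_\lambda(\sigma_j)$. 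By Schur orthogonality on $\Sk^n$, the edge term $H_{ij}$ has Fourier support only on irrep tuples that carry $\lambda$ on coordinates $i$ and $j$ and the trivial irrep elsewhere (using $\lambda^*\cong\lambda$ for $\Sk$). Within that block its coefficient is proportional to the maximally entangled matrix in $V_\lambda\otimes V_\lambda$. Hence $H$ is supported on at most $mr$ irrep labels, exactly as in Theorem~\ref{thm:fourier_fact}.

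\textbf{Step 2: sampling.} I will run the linearised phase oracle $e^{2\pi iH/K}$ on $\mathrm{poly}(k)$-size registers encoding $\Sk$, then apply Beals' QFT over $\Sk$, which uses $\mathrm{poly}(k)$ gates. Finally, measure only the \emph{irrep labels}; this is weak Fourier sampling, and it suffices precisely because the coefficients are scalar multiples of the identity. The argument of Stage~1 of Theorem~\ref{thm:correctness} carries over verbatim: conditional on a non-trivial outcome, an outcome $(\lambda;i,j)$ occurs with probability proportional to
\[
d_\lambda^{2}\,|c_{ij}(\lambda)|^2\,\big/\,(k!)^{2}.
\]
Under the edge-level oracle I query each edge separately. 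A coupon-collector bound then recovers the active set $S_{ij}=\{\lambda: c_{ij}(\lambda)\neq 0\}$ within $O(r\log(mr/\delta))$ samples per edge, provided a minimum-weight condition analogous to $p_{\min}$ holds; this is absorbed in the $\mathrm{poly}(k)$ factor. The values $c_{ij}(\lambda)$ are then estimated by amplitude estimation, or evaluated as $\langle f_{ij},\chi^\lambda\rangle$ via the Murnaghan--Nakayama rule, which is polynomial for the hook-type irreps supporting $d_{\mathrm{KT}}$ and $d_H$. This gives the $O(m\,r\,\mathrm{poly}(k))$ query count.

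\textbf{Step 3: reconstructing the optimum.} Once each sparse character expansion is known, $f_{ij}$ is a function of the cycle type alone, of which there are $p(k)$. I will evaluate it on each conjugacy class using the $r$ known characters, again by Murnaghan--Nakayama, and pick a minimising class and a representative $d^*_{ij}$. By the frustration-free hypothesis (ii), I then fix a spanning tree, set $\sigma_{\mathrm{root}}=e$, and propagate $\sigma_j=\sigma_i d^*_{ij}$ along it. Consistency on the non-tree edges follows from trivial holonomy, so every edge attains its minimum and the assignment is globally optimal. This is the exact $\Sk$ analogue of Stage~2 of Theorem~\ref{thm:correctness}.

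\textbf{Main obstacle.} The delicate point is the representative choice in Step 3. Holonomy is a condition on the chosen elements $d^*_{ij}$, not on their classes, so the frustration-free hypothesis has to be read as the existence of representatives with trivial cycle products. I will state this explicitly, as the parenthetical definition in the theorem does, and let the solver search for the witnessing representatives. Relatedly, the $p(k)\sim e^{\pi\sqrt{2k/3}}$ scan over classes is subexponential but not polynomial. I will try to avoid it by minimising only over the classes distinguished by the active irreps, and for the standard-irrep-based costs considered here that minimum is determined by $\mathrm{poly}(k)$ statistics such as fixed points and inversions.
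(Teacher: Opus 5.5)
Your architecture is the paper's: a per-edge QFT over $\Sk$ with weak Fourier sampling of irrep labels, Schur's lemma to make class-function coefficients scalar, a query-free classical scan over the $p(k)$ classes, and spanning-tree propagation. Your distinction between minimising classes and the representatives that must have trivial holonomy is more careful than the paper's bare ``propagate along a spanning tree''.

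The gap is in the query count of Step~2. Write $f_{ij}=\sum_\lambda a_\lambda\chi^\lambda$, so that $|a_\lambda|^2$ is your $d_\lambda^2|c_{ij}(\lambda)|^2/(k!)^2$. Both the coupon-collector count and the linearisation overhead grow with the inverse of the smallest relative weight $|a_\lambda|^2/\sum_{\mu\neq(k)}|a_\mu|^2$ of an active non-trivial irrep. Nothing in the hypotheses bounds this below by $1/\mathrm{poly}(k)$, so it cannot be ``absorbed in the $\mathrm{poly}(k)$ factor''. For a counterexample with $k\ge4$, take $f=\chi^{(k-1,1)}+\epsilon\chi^{(k-2,2)}$ with $\epsilon=2^{-k}$. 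The first term is $\mathrm{fix}-1$, which is minimised on every derangement class. The second term equals the number of 2-cycles on derangements, so the true minimisers are the derangement classes without 2-cycles. Yet $(k-2,2)$ is sampled with probability about $\epsilon^2$. You must therefore state a $p_{\min}$-type lower bound as an explicit hypothesis. The paper's sketch has the same hole: its $O(m\,r\,k^2\log k)$ multiplies $mr$ by the gate cost of Beals' QFT and never accounts for $P^2$ or $p_{\min}$.

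Separately, your coefficient step does not give $\mathrm{poly}(k)$ queries. $\langle f_{ij},\chi^\lambda\rangle$ is a class-size-weighted sum of $f_{ij}$ over all $p(k)$ classes, and Murnaghan--Nakayama computes character values, not the unknown values of $f_{ij}$. Amplitude estimation gives only additive approximations, which do not determine the argmin without a gap assumption. The clean repair is classical. Once the active labels are known, append the trivial character (which sampling cannot see). Then choose classes at which the square matrix of these characters' values is nonsingular; such classes exist because irreducible characters are linearly independent on classes, and finding them needs no queries. Query $f_{ij}$ at one representative of each chosen class and solve the linear system. This gives exact coefficients from at most $r+1$ queries per edge.

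Two smaller points. First, $d_{\mathrm{KT}}$ is not a class function: in $S_3$ the transpositions $(12)$ and $(13)$ have one and three inversions respectively. Drop the appeal to it, and to inversions as a class statistic. Second, your search for witnessing representatives is legitimate for a \emph{query} bound because it uses no queries; this is also why you need not avoid the $p(k)$ scan here. It is not polynomial-time in general, though. Take $k=3$ and $f_{ij}=\mathrm{fix}$ on every edge, so $r=2$ and the minimum is attained exactly on the 3-cycles. On a connected 3-colourable graph this instance is frustration-free. An optimal assignment puts all $\sigma_i$ in one coset of $A_3$ with adjacent vertices distinct, i.e.\ it is a proper 3-colouring. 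Producing it is therefore as hard as 3-colouring a 3-colourable graph. This matters for the paper's time claims, not for this theorem.
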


\begin{proof}[Proof sketch (full proof in Appendix~\ref{app:sk_proof})]
The proof parallels the abelian case (Theorem~\ref{thm:correctness}) with matrix-valued Fourier coefficients. By the Peter--Weyl theorem, the QFT over~$\Sk$ (implementable in $O(k^2\log k)$ gates~\cite{Beals1997}) maps each edge oracle query to a measurement over irrep labels~$\lambda\vdash k$, with probability proportional to $d_\lambda\|\hat{f}_{ij}(\lambda)\|_F^2$. For class functions, $\hat{f}(\lambda)=c_\lambda\cdot I_{d_\lambda}$, so one diagonal entry suffices per irrep; the minimisation over $p(k)$ conjugacy-class representatives (sub-exponential in~$k$) is efficient. Total query complexity: $O(m\cdot r\cdot k^2\log k)$, polynomial in~$k$ despite $|\Sk|=k!$.
\end{proof}

\paragraph{Remark (class-function restriction).} For general $r$-sparse functions on~$\Sk$, the Fourier coefficients are $d_\lambda\times d_\lambda$ matrices with $d_\lambda$ up to $\Theta(\sqrt{k!})$, making the minimisation step exponential; whether sparsity implies sufficient concentration for efficient approximate minimisation remains open.

\paragraph{Remark (query vs.\ total complexity).}
Theorem~\ref{thm:sk_quantum} counts \emph{oracle queries}; the classical minimisation over $p(k)\sim e^{\pi\sqrt{2k/3}}$ conjugacy classes adds sub-exponential post-processing ($e^{O(\sqrt{k})}$ total time). This is far below $O(k!)$ but not polynomial, so the full algorithm is not in BQP in general. Corollary~\ref{cor:bqp} identifies the subclass for which it is.\footnote{The QFT over~$\Sk$~\cite{Beals1997} uses $O(k^2\log k)$ gates; compiling each to precision $\epsilon_g=O(1/(k^2\log k))$ via Solovay--Kitaev~\cite{NielsenChuang2000} adds $O(\log(k/\epsilon_g))$ T-gates per rotation, without changing asymptotic query complexity.}

\subsection{Determined-minimiser instances and complexity classification}

\begin{definition}[Determined-Minimiser Permutation Coordination (DMPC)]
\label{def:dmpc}
A Permutation Coordination instance (Definition~\ref{def:perm_coord}) is a \emph{determined-minimiser} instance if, for each edge $(i,j)$, the minimising conjugacy class of $f_{ij}$ can be identified in $\mathrm{poly}(k)$ time---for example, when the cost is a composite class-function metric whose weight configuration determines the optimal cycle type. The DMPC \emph{search problem} asks for the optimal assignment $\sigma^*$; the DMPC \emph{decision problem} asks whether $H(\sigma)\le\tau$ for a given threshold.
\end{definition}

\begin{corollary}[BQP membership for DMPC]
\label{cor:bqp}
Under the conditions of Theorem~\ref{thm:sk_quantum}, for a DMPC instance (Definition~\ref{def:dmpc}), the full quantum algorithm (Fourier recovery $+$ minimisation $+$ spanning-tree propagation) runs in $O(m\cdot r\cdot\mathrm{poly}(k))$ total time and the problem is in~\textnormal{BQP}. Under Assumption~\ref{ass:no_sfft}, it is not in~\textnormal{P}; the quantum speedup $k!\to\mathrm{poly}(k)$ is genuine for the search problem.
\end{corollary}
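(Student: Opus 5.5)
The plan is to assemble the total running time from three stages and then read off the two complexity claims. \textbf{Stage 1 (Fourier recovery).} Invoke Theorem~\ref{thm:sk_quantum} directly. Under hypotheses (i) class-function edge costs and (ii) frustration-freeness, it recovers all active irrep labels $\lambda$ and scalars $c_\lambda$ with $\hat f_{ij}(\lambda)=c_\lambda I_{d_\lambda}$, using $O(m\cdot r\cdot k^2\log k)$ queries. Each query is one $\Sk$-QFT of $O(k^2\log k)$ gates~\cite{Beals1997}, plus an edge oracle whose gate cost is $\mathrm{poly}(k)$ because permutations on $O(k\log k)$ qubits can be composed reversibly. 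I would also note that the Solovay--Kitaev overhead from the footnote is only polylogarithmic. Stage~1 therefore costs $O(m\cdot r\cdot\mathrm{poly}(k))$ gates.

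\textbf{Stage 2 (minimisation).} This is where the general $e^{O(\sqrt{k})}$ cost over $p(k)$ conjugacy classes appears, as noted in the remark after Theorem~\ref{thm:sk_quantum}. The DMPC hypothesis (Definition~\ref{def:dmpc}) replaces that scan with a $\mathrm{poly}(k)$ identification of the minimising class $\rho^*_{ij}$ for each edge, $m\cdot\mathrm{poly}(k)$ in total. From $\rho^*_{ij}$ we take the canonical representative $d^*_{ij}$, which is computable in $\mathrm{poly}(k)$. \textbf{Stage 3 (spanning-tree propagation).} Fix $\sigma_{\mathrm{root}}=e$ and set $\sigma_j=\sigma_i d^*_{ij}$ along a BFS spanning tree. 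This takes $O(n)$ compositions, each in $O(k)$. Frustration-freeness, in the non-abelian holonomy form given in Theorem~\ref{thm:sk_quantum}, ensures that every non-tree edge also satisfies $\sigma_i^{-1}\sigma_j=d^*_{ij}$. Each edge then attains its individual minimum, so $H(\sigma)=\sum_{(i,j)}\min f_{ij}$, which is a lower bound on $H$ and hence optimal. Summing the three stages gives $O(m\cdot r\cdot\mathrm{poly}(k))$ total time, since $n\le m+1$ for connected graphs.

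\textbf{BQP membership.} The decision version asks whether the computed optimum is at most $\tau$. It is solved by this polynomial-time algorithm with bounded error, where the error is amplified in the usual way via the $\log(1/\delta)$ repetition bound inherited from Theorem~\ref{thm:correctness}. Hence DMPC $\in$ BQP. NP membership is immediate because a witness $\sigma$ is checked in $O(mk)$ time.

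\textbf{Non-membership in P.} This step is the main obstacle, and it can only be conditional. I would argue it directly from Assumption~\ref{ass:no_sfft}, which I expect to assert that no classical sparse Fourier transform over $\Sk$ recovers active modes in $\mathrm{poly}(k)$ time. The contrapositive goes as follows: a classical polynomial-time DMPC solver, applied to one-edge and few-edge gadget instances, would recover the minimising class and relative permutation, and thereby the active coefficients. The delicate part is showing that solving DMPC really yields the Fourier data forbidden by the assumption, and not merely the minimiser. I would handle this by encoding the unknown coefficient $c_\lambda$ into the weight configuration, so that the optimal cycle type reveals it. Failing that, I would state the corollary's non-P claim as holding precisely at the level of Assumption~\ref{ass:no_sfft}. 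The genuine $k!\to\mathrm{poly}(k)$ separation for the search problem then follows by comparing against the $|\Sk|=k!$ domain size that a generic classical search must cover.
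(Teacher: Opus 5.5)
Your Stages~1--3 and the BQP/NP-membership step follow the paper's proof essentially line for line. The gap is the ``not in~P'' step, and the repair you propose cannot work. Nothing in your Stages~2--3 uses the coefficients $c_\lambda$ from Stage~1. The DMPC promise (Definition~\ref{def:dmpc}) yields each edge's minimising class in $\mathrm{poly}(k)$ classical time from the instance description, e.g.\ the weight configuration. Even if that identification consumed Fourier data, for explicitly given composite costs the coefficients are linear in the known weights. Propagation is also classical. So, under your own reading, Stages~2--3 already form a classical polynomial-time algorithm for the search problem whenever the $f_{ij}$ are given explicitly, which is incompatible with non-membership in~P.

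Your gadget idea is circular for the same reason. To write down an instance whose weights encode $c_\lambda$ you must already know $c_\lambda$, so a DMPC solver never yields an oracle algorithm for a black-box $f$, and that is all Assumption~\ref{ass:no_sfft} speaks about. If instead the $f_{ij}$ are available only as black boxes, the assumption gives at most a conditional query separation, which is not a statement about~P. The paper's proof has exactly this defect: it simply asserts that the assumption ``implies no classical algorithm can match this with polynomially many queries, hence the search problem is not in~P''. Your fallback of stating the claim ``at the level of the assumption'' therefore reproduces the paper without supplying a derivation.

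There are two further problems. First, your closing comparison with $|\Sk|=k!$ is wrong for class functions. Classically you can query $f$ once on a representative $\sigma_\mu$ of each of the $p(k)$ classes $C_\mu$. Character orthogonality then recovers every coefficient of $f=\sum_\lambda a_\lambda\chi^\lambda$ via $a_\lambda=\sum_\mu (|C_\mu|/k!)\,f(\sigma_\mu)\,\chi^\lambda(\sigma_\mu)$. So the classical baseline is $e^{O(\sqrt{k})}$ queries, and any separation is at most $e^{O(\sqrt{k})}\to\mathrm{poly}(k)$, not $k!\to\mathrm{poly}(k)$; this also undercuts the corollary as stated.

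Second, Stage~3 has a gap. Read frustration-freeness as in Definition~\ref{def:frustration}, i.e.\ as the existence of a consistent selection from the argmin sets. That does not make your canonical representatives consistent, because for class functions each argmin set is a union of conjugacy classes. Take a triangle $1\to2\to3\to1$ with tree edges $(1,2),(2,3)$ and minimising classes (transposition, transposition, 3-cycle). The selection $(12),(23),((12)(23))^{-1}$ closes up. The canonical choice $(12),(12),(123)$ instead gives $\sigma_3^{-1}\sigma_1=e$ on the non-tree edge, which is not in the minimising class. You would need a step that finds a consistent selection, and the paper's ``spanning-tree propagation adds $O(n\cdot k)$'' glosses over this too.
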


\begin{proof}
Fourier recovery: $O(m\cdot r\cdot k^2\log k)$ queries, each specifiable in $O(k\log k)$ bits and executable with $O(k^2\log k)$ gates, so total \emph{time} (not merely query count) is $O(m\cdot r\cdot\mathrm{poly}(k))$. When the minimising conjugacy class is known \emph{a priori}, the $p(k)$-search bottleneck vanishes; spanning-tree propagation adds $O(n\cdot k)$. Total: $O(m\cdot r\cdot\mathrm{poly}(k))$, polynomial in all parameters, placing the search problem in BQP. For the \emph{decision} version (``does there exist $\sigma$ with $H(\sigma)\le\tau$?''), a candidate $\sigma$ is verifiable in $O(mk\log k)$ time, so DMPC Decision is in NP. Assumption~\ref{ass:no_sfft} implies no classical algorithm can match this with polynomially many queries, hence the search problem is not in~P.
\end{proof}

\paragraph{Remark (complexity classification of DMPC).}
\label{rem:complexity_class}
DMPC lies in NP$\,\cap\,$BQP and is conditionally outside~P (under Assumption~\ref{ass:no_sfft}). It is almost certainly not NP-hard, since NP-hard$\,\cap\,$BQP would imply the widely-disbelieved NP$\,\subseteq\,$BQP. This places DMPC in the intermediate complexity regime alongside integer factorisation and graph isomorphism. Whether DMPC$\,\in\,$co-NP remains open. A natural candidate for a co-NP witness---membership of the minimiser in a $\mathrm{poly}(r,k)$-sized ``extremal'' set of conjugacy classes---fails computationally: the minimiser of a random $r$-sparse class function can require up to $\Theta(\sqrt{k})$ distinct cycle-type parts regardless of~$r$ (Appendix~\ref{app:ecc_verification}).

\paragraph{Remark (applicability).}
Two situations qualify for DMPC: (i)~the minimiser is uniquely determined by symmetry (e.g.\ the identity class for pure Kendall tau, though this case is classically trivial); (ii)~domain knowledge constrains the search to $O(\mathrm{poly}(k))$ candidate cycle types. Establishing that specific composite metrics admit polynomial-time minimiser identification is an open problem.\footnote{Kemeny ranking aggregation (NP-hard even for four inputs~\cite{Bartholdi1989}) optimises one consensus ranking among fixed inputs. Permutation Coordination instead assigns variable permutations on a graph; Corollary~\ref{cor:bqp} identifies a tractable subclass, not a general Kemeny solver.}

\subsection{Classical hardness and exponential separation}

\begin{assumption}[No efficient classical sFFT for $\Sk$]
\label{ass:no_sfft}
There is no classical algorithm that recovers the $r$ active irrep labels and Fourier coefficients of an $r$-sparse class function $f\colon\Sk\to\mathbb{R}$ using $\mathrm{poly}(k,r)$ oracle evaluations.
\end{assumption}

\paragraph{Evidence for Assumption~\ref{ass:no_sfft}.}
Two lines of evidence support this assumption. \emph{(i)~Connection to the Hidden Subgroup Problem.} The $\Sk$-HSP is at least as hard as graph isomorphism~\cite{MooreRussellSchulman2005}. An efficient classical sFFT for~$\Sk$ would solve the coset-distinguishing variant. Crucially, even \emph{quantum} strong Fourier sampling fails for the $\Sk$-HSP~\cite{MooreRussellSchulman2005}. \emph{(ii)~Structural barriers.} Classical sFFT for abelian groups~\cite{Hassanieh2012,KushilevitzMansour1993} relies on subgroup subsampling and hashing; both fail for~$\Sk$ because the largest abelian subgroup has order ${\sim}3^{k/3}\ll k!$ and the non-commutativity prevents any known hashing analogue.

\begin{proposition}[Conditional super-exponential separation for $\Sk$]
\label{thm:sk_separation}
Combining Theorem~\ref{thm:sk_quantum} with Assumption~\ref{ass:no_sfft}: the quantum algorithm recovers $r$-sparse class-function Fourier coefficients over~$\Sk$ using $O(m\cdot r\cdot k^2\log k)$ oracle evaluations (polynomial in~$k$), while any classical algorithm requires $\omega(\mathrm{poly}(k))$ evaluations under the assumption. The conditional speedup is $k!\to\mathrm{poly}(k)$---super-exponential in~$k$.
\end{proposition}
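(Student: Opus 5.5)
The proposition is essentially a combination of two earlier statements, so I would prove it as two separate bounds and then compare them. For the \emph{quantum upper bound}, I would invoke Theorem~\ref{thm:sk_quantum} directly. Under its hypotheses (class-function edge costs on a frustration-free graph), the algorithm recovers all active Fourier modes with $O(m\cdot r\cdot k^2\log k)$ edge-oracle queries. I would spell out where this count comes from. Each edge's cost $f_{ij}$ is a class function, so $\hat f_{ij}(\lambda)=c_\lambda I_{d_\lambda}$, and one QFT over $\Sk$ (via~\cite{Beals1997}) samples $\lambda$ with probability proportional to $d_\lambda^2|c_\lambda|^2$. A coupon-collector argument over the at most $r$ active labels per edge, as in Stage~3 of Theorem~\ref{thm:correctness}, then gives $O(r)$ samples per edge up to logarithmic factors, and these are absorbed into the $k^2\log k$ factor. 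A further step is to estimate each scalar $c_\lambda$, which requires only one diagonal entry. I would note explicitly that this count is independent of $|\Sk|=k!$ and of $d_{\max}=\Theta(\sqrt{k!})$, and that it is polynomial in $k$ for fixed $m,r$.

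For the \emph{classical lower bound}, I would argue by contrapositive from Assumption~\ref{ass:no_sfft}. Take a single edge, $m=1$, $n=2$, whose cost $f$ is an arbitrary $r$-sparse class function on $\Sk$. Suppose a classical algorithm solved the coefficient-recovery task of the proposition using $\mathrm{poly}(k)$ evaluations. Restricted to this instance, it would output the $r$ active labels and coefficients of $f$ with $\mathrm{poly}(k,r)$ evaluations, which contradicts the assumption. Hence every classical algorithm needs $\omega(\mathrm{poly}(k))$ evaluations on some family of instances. For general $m$, the same instance embeds as one edge with the remaining edges set to zero cost, so the lower bound is inherited.

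The final step compares the two bounds. The quantum cost is $\mathrm{poly}(k)$, while brute-force classical evaluation over $\Sk$ costs $k!$. Under the assumption, no classical method achieves $\mathrm{poly}(k)$, so the gap is at least super-polynomial, and relative to exhaustive search it is $k!\to\mathrm{poly}(k)$. I expect the main obstacle to be this last phrasing. The assumption only yields $\omega(\mathrm{poly}(k))$, not $\Omega(k!)$, so calling the separation ``super-exponential'' really compares against the natural classical baseline $k!$, not against a proven lower bound. I would state this carefully, and would not try to derive an $\Omega(k!)$ classical bound, since that does not follow from the assumption. A secondary subtlety is aligning the oracle models, so that one classical evaluation of $f_{ij}$ on a single permutation matches one quantum edge-oracle query; I would fix this convention before the contrapositive argument.
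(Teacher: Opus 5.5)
Your decomposition matches the paper, which gives no argument beyond ``combining'' the two results. You cite Theorem~\ref{thm:sk_quantum} for the quantum count, obtain the classical bound by contraposition from Assumption~\ref{ass:no_sfft} on a single-edge instance padded with zero-cost edges, and then compare. Your caveat that the assumption yields only $\omega(\mathrm{poly}(k))$, not $\Omega(k!)$, is correct, and it is sharper than the paper's own phrasing.

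The gap is in your fallback for the last step: $k!$ is not a legitimate classical baseline here. The edge costs are \emph{promised} to be class functions, so a classical algorithm can query $f_{ij}$ once on a representative of each cycle type. These $p(k)$ values determine $f_{ij}$ completely, and every coefficient $c_\lambda$ then follows from the character table by orthogonality, with no further queries. Exhaustive classical recovery therefore costs at most $m\cdot p(k)=m\,e^{O(\sqrt{k})}$ evaluations. Hence the speedup over the best classical method is at most $e^{O(\sqrt{k})}$, which is sub-exponential. Enumerating all $k!$ permutations is a strawman, because it ignores the very promise the quantum algorithm exploits. So the ``super-exponential'' clause cannot be established as a statement about classical complexity, by your route or any other. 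What the combination actually supports is a conditional super-polynomial separation: quantum $\mathrm{poly}(k)$, with the classical cost between $\omega(\mathrm{poly}(k))$ and $e^{O(\sqrt{k})}$. The paper itself uses the $p(k)$-representative fact in Stage~2 of its proof of Theorem~\ref{thm:sk_quantum}.

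On the quantum side, simply cite the theorem, because your sketch of where $O(m\cdot r\cdot k^2\log k)$ comes from does not hold as written. The analogue of Stage~3 of Theorem~\ref{thm:correctness} gives $O(P^2/p_{\min}\cdot\log(\cdot))$ raw samples, not $O(r)$ up to logarithms. Under linearisation the trivial irrep carries almost all the probability. The weights $d_\lambda^2|c_\lambda|^2$ also have no a priori bounded dynamic range. Finally, $k^2\log k$ is the gate cost of the $S_k$ QFT, and it cannot absorb these sampling overheads.
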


Table~\ref{tab:sk_scaling} and Fig.~\ref{fig:sk_scaling} show the query complexity comparison.

\begin{table}[!htbp]
\centering
\caption{Query complexity for Permutation Coordination over $\Sk$ ($m=10$ edges, $r=3$). Quantum: edge-level queries ($m\cdot r\cdot k^2\lceil\log_2 k\rceil$); Classical: function evaluations ($m\cdot k!$). Crossover occurs at $k\approx 6$.}
\label{tab:sk_scaling}
\resizebox{\columnwidth}{!}{%
\begin{tabular}{@{}rrrrl@{}}
\toprule
$k$ & $|S_k|$ & Q.\ queries & Cl.\ queries & Speedup \\
\midrule
3 & 6 & $5.4\times10^2$ & $6.0\times10^1$ & $0.1\times$ \\
5 & 120 & $2.3\times10^3$ & $1.2\times10^3$ & $0.5\times$ \\
7 & 5\,040 & $4.4\times10^3$ & $5.0\times10^4$ & $11\times$ \\
10 & 3\,628\,800 & $1.2\times10^4$ & $3.6\times10^7$ & $3\,024\times$ \\
15 & $1.3\times10^{12}$ & $2.7\times10^4$ & $1.3\times10^{13}$ & $4.8\times10^8$ \\
\bottomrule
\end{tabular}}
\end{table}

\begin{figure}[!htbp]
\centering
\includegraphics[width=\columnwidth]{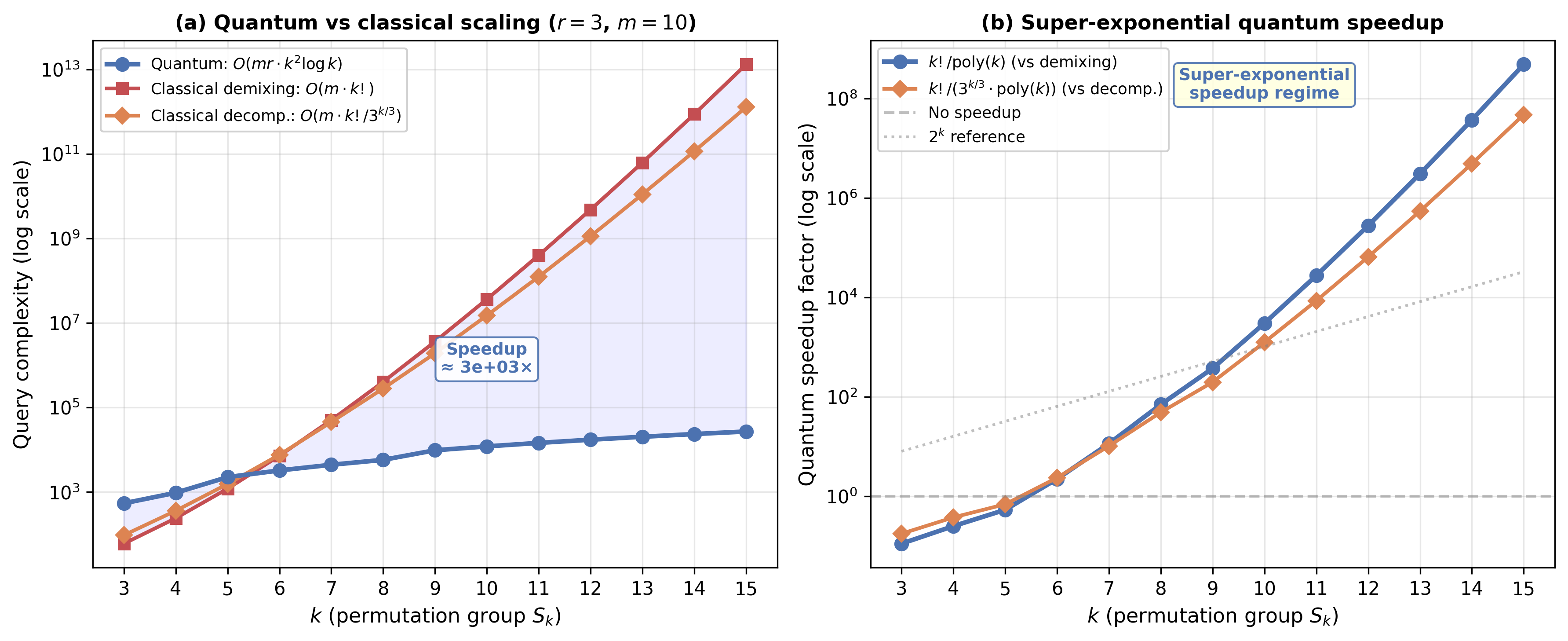}
\caption{Quantum vs.\ classical scaling for Permutation Coordination over $\Sk$ ($r=3$, $m=10$). (a)~Query complexity: quantum scales as $\mathrm{poly}(k)$ while classical scales as $k!$. (b)~The speedup ratio is super-exponential, exceeding $2^k$ (dashed) for all $k\ge7$.}
\label{fig:sk_scaling}
\end{figure}

\subsection{Structural classification and classical limits}
\label{sec:sfft_gap}

We now unify the abelian--non-abelian landscape through a single structural invariant.

\begin{definition}[Abelian index]
\label{def:abelian_index}
For a finite group~$G$, the \emph{abelian index} is $\alpha(G)=|G|/\max\{|A|:A\le G,\;A\text{ abelian}\}$.
\end{definition}

For the three groups studied: $\alpha(\ZC)=1$, $\alpha(\DC)=2$, and $\alpha(\Sk)=k!/3^{k/3}$~\cite{DixonMortimer1996}.

\begin{proposition}[Fundamental inequality]
\label{prop:fundamental_ineq}
For any finite group~$G$, $\dmax(G)\le\alpha(G)$.
\end{proposition}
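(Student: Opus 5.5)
The plan is to prove the classical fact that an abelian subgroup $A\le G$ forces every irreducible representation of $G$ to have dimension at most $[G:A]$. Applying this with $A=A_{\max}$ gives $\dmax(G)\le |G|/|A_{\max}|=\alpha(G)$. The argument goes through Frobenius reciprocity, with restriction to $A$ as the bridge.

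\textbf{Step 1 (restriction to $A$).} Let $\rho$ be an irreducible complex representation of $G$ on $V$, with $\dim V=d$. Because $A$ is abelian, every irreducible representation of $A$ is one-dimensional. The restriction $\rho|_A$ therefore splits into characters, and in particular contains some linear character $\chi$ of $A$: there is a nonzero $v\in V$ with $\rho(a)v=\chi(a)v$ for all $a\in A$.

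\textbf{Step 2 (Frobenius reciprocity).} We have $\langle \rho|_A,\chi\rangle_A\ge 1$. By Frobenius reciprocity, $\langle \rho,\operatorname{Ind}_A^G\chi\rangle_G=\langle \rho|_A,\chi\rangle_A\ge1$. Hence the irreducible $\rho$ occurs as a constituent of $\operatorname{Ind}_A^G\chi$, which gives
\[
d\le \dim\operatorname{Ind}_A^G\chi=[G:A]\cdot\dim\chi=[G:A].
\]

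\textbf{Step 2$'$ (direct alternative).} To avoid citing reciprocity, one can argue inside $V$. Fix coset representatives $g_1,\dots,g_t$ of $G/A$, where $t=[G:A]$, and set $W=\operatorname{span}\{\rho(g_\ell)v\}$. For any $g\in G$ we have $gg_\ell=g_{\ell'}a$ for some $a\in A$, so $\rho(g)\rho(g_\ell)v=\chi(a)\rho(g_{\ell'})v\in W$. Thus $W$ is a nonzero $G$-invariant subspace. Irreducibility then forces $W=V$, so $d\le t$.

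\textbf{Step 3.} Take the maximum over irreducibles $\rho$ and choose $A$ to be an abelian subgroup of maximal order, as in Definition~\ref{def:abelian_index}. This yields $\dmax(G)\le\alpha(G)$.

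There is no real obstacle here. The only point requiring care is the existence of a common eigenvector $v$ for the commuting operators $\rho(a)$, $a\in A$. This follows because the $\rho(a)$ are commuting diagonalizable operators over $\mathbb{C}$: they have finite order and can be taken unitary. That fact is what guarantees the one-dimensional $\chi$ in Step~1.
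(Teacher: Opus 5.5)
Your proof is correct and follows the paper's argument: restrict an irreducible $\rho$ to an abelian subgroup $A$ of maximal order, pick a one-dimensional constituent $\chi$ of $\rho|_A$, and use Frobenius reciprocity to embed $\rho$ in $\operatorname{Ind}_A^G\chi$, whose dimension is $[G:A]=\alpha(G)$. Your Step~2$'$, which spans the vectors $\rho(g_\ell)v$ over coset representatives, is a valid self-contained version of the same idea that avoids citing reciprocity.
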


\begin{proof}
Let $A\le G$ be abelian with $[G:A]=\alpha(G)$, and let $\rho\in\mathrm{Irr}(G)$ achieve $d_\rho=\dmax(G)$. Since $A$ is abelian, $\mathrm{Res}^G_A\rho$ decomposes into one-dimensional representations; by Frobenius reciprocity, $\rho$ embeds in $\mathrm{Ind}^G_A\chi$ for some $\chi\in\mathrm{Irr}(A)$, giving $\dmax(G)\le[G:A]=\alpha(G)$.
\end{proof}

\paragraph{The abelian complexity trichotomy.}
Proposition~\ref{prop:fundamental_ineq} yields three regimes. \emph{Regime~I} ($\alpha(G)=1$): $G$ is abelian; classical sFFT~\cite{Hassanieh2012} recovers all $mr$ coefficients from $O(m\cdot r\cdot\mathrm{polylog}\,C)$ edge-level queries, matching quantum. \emph{Regime~II} ($1<\alpha(G)=\dmax(G)$, bounded): classical coset decomposition gives polynomial cost; advantage is at most polynomial (includes~$\DC$, $\alpha=2$). \emph{Regime~III} ($\alpha(G)\gg\dmax(G)\gg1$): no efficient coset decomposition exists; quantum advantage is genuine. $\Sk$ sits firmly in Regime~III.

Even in the global oracle model, classical \emph{node-sweep} demixing extracts edge-level coefficients at cost $O(m\cdot C)$ for abelian groups---polynomial, though independent of~$r$. For~$\Sk$, demixing costs $O(m\cdot k!)$---exponential---while quantum runs in $O(m\cdot r\cdot\mathrm{poly}(k))$.

\begin{table*}[!htbp]
\centering
\caption{Group progression: structural parameters, complexity, and quantum speedup. Abbreviations: f.f.\ = frustration-free; cl.fn.\ = class function.}
\label{tab:progression}
\resizebox{\textwidth}{!}{%
\begin{tabular}{@{}lllll@{}}
\toprule
Group & Abel.\ idx & sFFT? & Complexity & Q.\ speedup \\
\midrule
$\ZC$ & 1 & Yes & NP-c (gen.); P$\,\cap\,$BQP ($r$-sparse, f.f.) & Poly. \\
$\DC$ & 2 & Yes & NP-c (gen.); P$\,\cap\,$BQP ($r$-sparse, f.f.) & Poly. \\
$\Sk$ DMPC & $k!/3^{k/3}$ & \textbf{No} & NP$\,\cap\,$BQP; cond.\ $\notin$~P (Cor.~\ref{cor:bqp}, Asm.~\ref{ass:no_sfft}). Not NP-hard. & $k!\to\mathrm{poly}(k)$ \\
$\Sk$ (gen.) & $k!/3^{k/3}$ & \textbf{No} & Poly.\ queries, $e^{O(\sqrt{k})}$ total; classically EXPTIME & $k!\to e^{O(\sqrt{k})}$ \\
\bottomrule
\end{tabular}}
\end{table*}

\subsection{Structural complexity boundary}
\label{sec:complexity_dichotomy}

The algorithm does not solve all Fourier-NC instances (which would imply NP$\,\subseteq\,$BQP). The boundary is determined by frustration-freeness (Definition~\ref{def:frustration}). At $C=2$, $r=1$, the algorithm succeeds precisely on bipartite graphs---where MAX-CUT is in~P. The instances where the algorithm succeeds are therefore classically tractable with edge-level access, placing abelian Fourier-NC in the HHL category~\cite{HHL2009}: the quantum speedup is real within the oracle framework but does not put NP-complete problems in BQP. For~$\Sk$ DMPC instances, the situation is different: the problem is conditionally outside~P even with edge-level access (Remark~\ref{rem:complexity_class}), placing it in the intermediate regime. A hybrid approach---quantum Fourier recovery plus classical enumeration over $C^\beta$ residual configurations---is exact when $\beta=O(\log n/\log C)$, typical of engineering networks~\cite{Barthelemy2011}. The complexity landscape (Fig.~\ref{fig:complexity_landscape}) reflects this structure.

\begin{figure}[!htbp]
\centering
\includegraphics[width=\columnwidth]{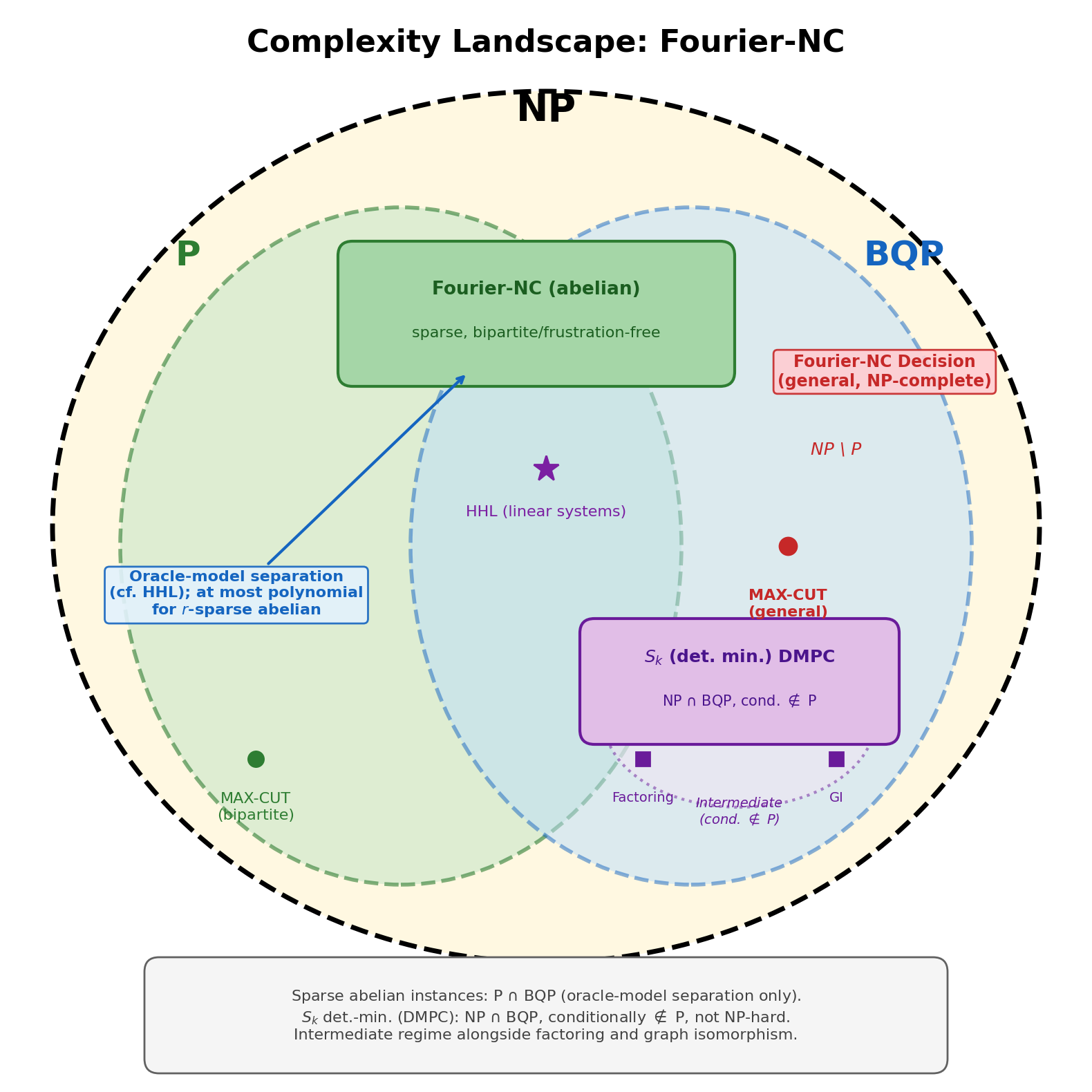}
\caption{Computational complexity landscape. $\ZC^n$/$\DC$ general instances are NP-complete; their $r$-sparse frustration-free restrictions lie in P$\,\cap\,$BQP. DMPC (Cor.~\ref{cor:bqp}) lies in NP$\,\cap\,$BQP and is conditionally outside~P (dashed boundary), alongside integer factorisation and graph isomorphism. General $r$-sparse class-function instances lie in NP but are not known to be in BQP due to $e^{O(\sqrt{k})}$ minimisation.}
\label{fig:complexity_landscape}
\end{figure}

\subsection{Limitations}
\label{sec:limitations}

The principal limitations are: (i)~for $r$-sparse abelian instances, classical sFFT matches quantum; genuine advantage requires~$\Sk$; (ii)~the $\Sk$ separation is conditional on Assumption~\ref{ass:no_sfft}; (iii)~polynomial runtime requires $p_{\min}\ge 1/\mathrm{poly}$, verified for standard cost functions but not adversarial instances; (iv)~frustrated instances yield bounded approximation (Proposition~\ref{prop:frustration_gap}); (v)~coherent QFT circuits of depth $O(k^2\log k)$ may exceed NISQ capabilities; (vi)~extending beyond class functions on~$\Sk$ remains open.

\section{Conclusion}
\label{sec:conclusion}

We introduced the Fourier Network Coordination problem (Fourier-NC), unifying network coordination across 8 application domains through a shared Fourier-sparse pairwise structure. For abelian and dihedral groups, classical sFFT matches quantum for sparse instances; the genuine separation emerges for~$\Sk$: a conditional super-exponential speedup of $k!\to\mathrm{poly}(k)$ for class-function costs with non-trivial minimisers, placing DMPC in NP$\,\cap\,$BQP and conditionally outside~P (Corollary~\ref{cor:bqp}).

The central structural insight is that the \emph{abelian index} $\alpha(G)=[G:A_{\max}]$ governs the quantum--classical gap: when $\alpha(G)$ is constant ($\ZC$, $\DC$), classical sFFT suffices; when super-exponential ($\Sk$), quantum coherence is irreplaceable. This establishes non-commutativity---not merely exponential state-space size---as the structural prerequisite for super-polynomial quantum advantage in network coordination.

Future work includes unconditional lower bounds for sparse Fourier recovery over~$\Sk$, implementation on fault-tolerant hardware, and extensions to other non-abelian groups (wreath products $\mathbb{Z}_m\wr\Sk$, general linear groups $\mathrm{GL}_n(\mathbb{F}_q)$) along the abelian index frontier.

\appendix
\numberwithin{equation}{section}
\numberwithin{table}{section}
\numberwithin{figure}{section}
\numberwithin{theorem}{section}


\section{Extension to Directed Networks}
\label{app:dihedral}

When edge costs are direction-dependent ($f_{ij}\neq f_{ji}$), the symmetry group extends from $\ZC$ to the dihedral group $\DC=\langle r,s\mid r^C=s^2=e,\;srs=r^{-1}\rangle$ of order~$2C$.

\begin{theorem}[Dihedral Complexity and Correctness]
\label{thm:dihedral_complexity}
For directed NC on $\DC$ with $r$-sparse costs, the quantum algorithm runs in $O\!\bigl((4mr+n)\cdot\log^2 C\bigr)$ gates per repetition and recovers the optimal assignment with probability $\ge1-\delta$ using $T=O(mr\cdot n\cdot\log(mr/\delta))$ repetitions.
\end{theorem}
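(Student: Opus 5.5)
The plan is to reduce the dihedral case to the abelian analysis of Theorems~\ref{thm:fourier_fact} and~\ref{thm:correctness}, through the representation theory of $\DC$. Every irreducible representation of $\DC$ has dimension at most $2$. There are at most four one-dimensional irreps and about $C/2$ two-dimensional irreps $\rho_k$, in which the rotation $r^a$ acts as $\mathrm{diag}(\omega_C^{ka},\omega_C^{-ka})$ and the reflection acts as the swap matrix. An $r$-sparse cost $f_{ij}\colon\DC\to\mathbb{R}$ therefore has at most $r$ non-zero matrix coefficients $\hat f_{ij}(\rho)$. Each of these is a $2\times2$ matrix, so it carries at most $4$ scalar entries. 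That factor is where the $4mr$ in the statement comes from.

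\emph{Step 1 (Factorisation).} First I will redo the character-orthogonality argument of Theorem~\ref{thm:fourier_fact} over $\DC^n$. The QFT over $\DC^n$ is a tensor product of $n$ single-register QFTs over $\DC$. Under it, the single-edge function $H_{ij}(g)=f_{ij}(g_i^{-1}g_j)$ has non-zero coefficients only on tuples $(\rho_1,\dots,\rho_n)$ that are trivial off $\{i,j\}$ and satisfy $\rho_j\cong\rho_i^{*}$. In that case the coefficient block is determined by $\hat f_{ij}(\rho_i)$. So each edge contributes at most $r$ irrep pairs, hence at most $4r$ matrix entries, and $H$ has at most $4mr$ non-zero Fourier entries in total.

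\emph{Step 2 (Gate count).} Next I will count gates per repetition. Each register is $\lceil\log_2 C\rceil+1$ qubits: one reflection bit plus the rotation index. The edge oracle computes $g_i^{-1}g_j$. In the semidirect-product encoding this is a modular subtraction or addition of rotation indices, conditioned on the reflection bits, which costs $O(\log^2 C)$ gates. The oracle then applies the phase for the synthesised $f_{ij}$ using at most $4r$ controlled rotations, one per matrix entry, and finally uncomputes. That totals $O(4r\log^2 C)$ per edge. The QFT over $\DC$ costs $O(\log^2 C)$, since it is the cyclic QFT plus a constant-size correction on the reflection qubit. Summing gives $O((4mr+n)\log^2 C)$.

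\emph{Step 3 (Correctness and repetitions).} I will carry the three stages of Theorem~\ref{thm:correctness} over unchanged. Linearisation gives a measurement distribution over irrep labels and matrix indices proportional to $d_\rho|\hat H(\rho)_{ab}|^2$. The coupon-collector bound applies to the $s\le 4mr$ active entries, and under the $p_{\min}$ hypothesis it yields $T=O(mr\cdot n\cdot\log(mr/\delta))$; the constant $4$ is absorbed. For reconstruction I will split each relative element $d^*_{ij}\in\DC$ into a reflection part and a rotation part. The reflection parts are solved along a spanning tree over $\mathbb{Z}_2$. After fixing them, the rotation parts give linear congruences mod $C$, exactly as in the abelian Step~2. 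I will assume frustration-freeness here, in the non-abelian holonomy sense stated with Theorem~\ref{thm:sk_quantum}.

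I expect Step~3's reconstruction to be the main obstacle. In a two-dimensional irrep $\rho_k$ the measured coefficient depends on $\omega_C^{\pm ka}$ jointly, and the two signs are exchanged by the reflection. So the phase read from a single diagonal entry determines the rotation offset only up to sign, and only once the reflection bit is known. To break this ambiguity I would first use the one-dimensional sign characters to pin down the reflection parts. Then I would read the off-diagonal entries, which are non-zero exactly when the relative element is a reflection, and use them to fix the orientation before applying the Chinese Remainder Theorem step.
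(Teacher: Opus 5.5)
Your Steps~1--2 match the paper's proof. The paper's sketch consists only of two observations: the QFT over $\DC$ is a QFT over $\ZC$ plus a reflection-qubit gadget, and irreps of dimension $\le2$ give $4r$ controlled rotations per edge, with the $4\times$ overhead absorbed. It then asserts that correctness carries over from Theorem~\ref{thm:correctness}. You are right to import frustration-freeness and the $p_{\min}$ hypothesis, which the statement omits but needs. With $f_{ij}=\epsilon$, the sign character (trivial on rotations, $-1$ on reflections), one gets $H=m-2\,\mathrm{cut}(b)$, where $b$ collects the reflection bits, so the unrestricted claim would put MAX-CUT in BQP. Also note that carrying Stage~3 over unchanged gives $O(P^2/p_{\min}\cdot\log(mr/\delta))$ raw runs. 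The stated $T=O(mr\cdot n\cdot\log(mr/\delta))$ counts only non-zero-mode samples, so say which you mean.

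The genuine gap is in the one place you go beyond the paper, the disambiguation in Step~3, which would fail. First, the coefficient $\hat f_{ij}(\epsilon)=\frac{1}{2C}\bigl(\sum_a f_{ij}(r^a)-\sum_a f_{ij}(sr^a)\bigr)$ compares coset \emph{averages}, not coset minima. For example, the sparse cost $f(r^a)=0$, $f(sr^a)=1-2\cos(2\pi a/C)$ has the larger average on reflections yet is minimised at $s$. This coefficient also vanishes for any cost supported only on two-dimensional irreps. Second, the off-diagonal entries of $\hat f_{ij}(\rho_k)$ are non-zero whenever $f_{ij}$ oscillates at frequency $k$ on the reflection coset, wherever $d^*_{ij}$ lies. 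You are conflating $\hat f_{ij}(\rho)$ with $\rho(d^*_{ij})$; these track each other only in special cases, such as translates $f_{ij}(h)=F\bigl((d^*_{ij})^{-1}h\bigr)$ of a class-function template.

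Nor is there a sign ambiguity to break, because the measurement returns the row and column indices. In your basis, up to normalisation:
\begin{itemize}
\item the diagonal entries of $\hat f_{ij}(\rho_k)$ are the $\ZC$-DFT coefficients at $\pm k$ of $f^0_{ij}(a)=f_{ij}(r^a)$;
\item the off-diagonal entries are the $\ZC$-DFT coefficients at $\pm k$ of $f^1_{ij}(a)=f_{ij}(sr^a)$;
\item the one-dimensional irreps carry frequency $0$ (and $C/2$) of $f^0_{ij}\pm f^1_{ij}$.
\end{itemize}
So an $r$-sparse cost on $\DC$ is just a pair of $2r$-sparse costs on $\ZC$. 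Minimise each by the abelian Stage~2 and keep the smaller, which gives $d^*_{ij}=s^{\beta_{ij}}r^{c_{ij}}$. Then propagate $g_j=g_id^*_{ij}$ along a spanning tree; for $g=s^br^a$ this reads $b_j=b_i\oplus\beta_{ij}$ and $a_j\equiv(-1)^{\beta_{ij}}a_i+c_{ij}\pmod C$.
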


\begin{proof}[Proof sketch]
The QFT over $\DC$ decomposes into a QFT over $\ZC$ plus a Hadamard on a reflection qubit. All irreps of $\DC$ have dimension $\le2$, so each edge requires $4r$ controlled-rotation gates. The $4\times$ gate overhead does not affect the polynomial scaling.
\end{proof}

\section{Full Proof of Theorem~\ref{thm:correctness} (Abelian Correctness)}
\label{app:correctness_proof}

\begin{proof}
The proof proceeds in three stages: coefficient recovery, linear system solution, and error analysis.

\emph{Stage~1 (Coefficient Recovery via Linearisation).} After applying $H^{\otimes n\cdot\log_2 C}$ and the phase oracle $U_H$, the quantum state is $|\psi\rangle=(1/\sqrt{C^n})\sum_\mu g(\mu)|\mu\rangle$ where $g(\mu)=e^{2\pi i\,H(\mu)/K}$. Because the phase oracle encodes $g=e^{2\pi iH/K}$ rather than $H$ directly, the DFT of $g$ is not the DFT of $H$---the exponential introduces all harmonics via the Jacobi--Anger expansion. To recover the Fourier coefficients of~$H$, we operate in the \emph{linearisation regime}: setting $K=P\cdot m\cdot\max\|f_{ij}\|_\infty$ with $P=\mathrm{poly}(n,m,r)$ ensures $|2\pi H(\mu)/K|\le 2\pi/P\ll 1$ for all~$\mu$. Taylor expanding: $g(\mu)=1+\tfrac{2\pi i}{K}H(\mu)+O(1/P^2)$. Taking the DFT: for $\mathbf{k}\neq\mathbf{0}$, $\hat{g}(\mathbf{k})=\tfrac{2\pi i}{K}\hat{H}(\mathbf{k})+O(1/P^2)$; the zero mode is $\hat{g}(\mathbf{0})=1+O(1/P)$. Most measurement probability ($1-O(1/P^2)$) concentrates on the zero mode~$\mathbf{k}=\mathbf{0}$, but \emph{conditional on measuring $\mathbf{k}\neq\mathbf{0}$}, the distribution is proportional to $|\hat{H}(\mathbf{k})|^2+O(1/P^2)$. Since the total error across all $s\le mr$ non-zero modes is $O(mr/P^2)$, correct sampling requires $P\gg\sqrt{mr/p_{\min}}$; under $p_{\min}\ge 1/(mr\cdot n)$, we set $P=\Theta((mr)^2\cdot n)$, which remains polynomial.

\emph{Stage~2 (Linear System).} The phase oracle factorises as $U_H=\prod_{(i,j)}U_{ij}$ (Sec.~\ref{sec:oracle}). By Theorem~\ref{thm:fourier_fact}, the only non-zero Fourier modes of $H$ lie on the anti-diagonal $k_j=-k_i\bmod C$ for each edge $(i,j)$. When measurement yields frequency vector $\mathbf{k}$ with exactly one non-zero pair $(k_i,k_j=-k_i)$, this identifies a specific edge $(i,j)$ and its Fourier mode. Edge-level oracle access allows independent recovery of each $\hat{f}_{ij}(k_i)$, so the per-edge phase $\varphi_{ij}=\arg\hat f_{ij}(k_i)$ is determined without ambiguity. Each measured $\mathbf{k}$ reveals the linear congruence $k_i\cdot(\mu^*_i-\mu^*_j)\equiv\varphi_{ij}\pmod{C}$. When $\gcd(k_i,C)>1$, multiple congruences with coprime frequencies resolve ambiguity via the Chinese Remainder Theorem. Collecting congruences along a spanning tree determines all $n-1$ offset differences.

\emph{Stage~3 (Repetition Bound).} Due to linearisation, each raw measurement returns the zero mode with probability $1-O(1/P^2)$; conditional on non-zero outcome, $\Pr[\mathbf{k}]\propto|\hat H(\mathbf{k})|^2/\|\hat H\|^2$. By a coupon-collector argument, $T_{\mathrm{raw}}=O(P^2/p_{\min}\cdot\log(mr/\delta))$ measurements suffice. Under $p_{\min}\ge 1/(mr\cdot n)$, the total is $O(\mathrm{poly})$. Phase estimation is exact because phases take discrete values in $\{0,\ldots,C{-}1\}$.
\end{proof}

\section{Proof of the Fourier Factorization Theorem}
\label{app:fourier_fact_proof}

\begin{proof}
By linearity, $\hat H(\mathbf{k}) = \sum_{(i,j)\in\mathcal{A}} \hat H_{ij}(\mathbf{k})$, where $\hat H_{ij}(\mathbf{k}) = C^{-n}\sum_{\boldsymbol{\mu}} f_{ij}(\mu_i{-}\mu_j)\,\omega_C^{-\mathbf{k}\cdot\boldsymbol{\mu}}$. For $\ell\neq i,j$, the sum over $\mu_\ell$ gives $\sum_{\mu_\ell}\omega_C^{-k_\ell\mu_\ell}=C\cdot\delta_{k_\ell,0}$ (character orthogonality), forcing $k_\ell=0$. Substituting $d=\mu_i{-}\mu_j$, $t=\mu_j$:
\begin{multline}
\hat H_{ij}(\mathbf{k}) = \frac{1}{C^2}\sum_{d} f_{ij}(d)\,\omega_C^{-k_i d}\sum_{t}\omega_C^{-(k_i+k_j)t} \\
= \delta_{k_j,-k_i}\cdot\hat f_{ij}(k_i).
\label{eq:dft_factored}
\end{multline}
Each edge contributes at most $r$ non-zero frequency vectors. Summing: at most $mr$ non-zero modes total.
\end{proof}

\section{$p_{\min}$ Bounds for Engineering Cost Functions}
\label{app:pmin}

\begin{proposition}[$p_{\min}$ bound for cosine coupling]
\label{prop:pmin_cosine}
Let each edge cost function have the form $f_{ij}(x)=\sum_{\ell=1}^{r'}a_{ij,\ell}\cos(2\pi\ell\, x/C)$ with $a_{ij,\ell}>0$, and let $\Delta=\max_i\deg(i)$. Then $p_{\min} \ge 1/(\Delta\cdot m\cdot r'\cdot\kappa^2)$, where $\kappa = \max a/\min a$ is the dynamic range. For sparse networks ($\Delta=O(1)$) with $\kappa=O(\mathrm{poly}(n))$, this gives $p_{\min}\ge 1/\mathrm{poly}(n,m,r)$.
\end{proposition}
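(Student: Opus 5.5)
We would compute $\hat H$ explicitly with the Fourier Factorization Theorem (Theorem~\ref{thm:fourier_fact}) and then bound the ratio $|\hat H(\mathbf{k})|^2/\|\hat H\|^2$ from below. The key step is a bound on how many edges can share a frequency vector.

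\emph{Step 1 (single-edge spectrum).} Write $\cos(2\pi\ell x/C)=\tfrac12(\omega_C^{\ell x}+\omega_C^{-\ell x})$. Then $\hat f_{ij}(\pm\ell)=a_{ij,\ell}/2$ for $\ell=1,\ldots,r'$, and all other coefficients vanish. We assume $2\ell\not\equiv 0\pmod C$; the case $\ell=C/2$ only doubles a coefficient and costs a constant factor. By equation~\eqref{eq:dft_factored}, edge $(i,j)$ contributes $a_{ij,\ell}/2$ at the vector $\mathbf{k}$ with $k_i=\pm\ell$, $k_j=\mp\ell$, and all other entries zero. This is at most $2r'$ vectors per edge.

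\emph{Step 2 (no cancellation; bounding the numerator).} A frequency vector $\mathbf{k}$ of this form has exactly two non-zero entries, at positions $i$ and $j$, so it identifies the unordered pair $\{i,j\}$. It can therefore receive contributions only from edges joining $i$ and $j$: the edge in one orientation, and possibly also $(j,i)$. These contributions are sums of positive reals $a/2$, so nothing cancels. Hence every non-zero coefficient satisfies $|\hat H(\mathbf{k})|\ge a_{\min}/2$.

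\emph{Step 3 (bounding the denominator).} By Parseval over the at most $2mr'$ active vectors, $\|\hat H\|^2=\sum_{\mathbf{k}}|\hat H(\mathbf{k})|^2$. Each coefficient is a sum over at most the edges between one node pair. Bounding that multiplicity crudely by $\Delta$ gives $|\hat H(\mathbf{k})|\le \Delta a_{\max}/2$. Combining,
\[
\|\hat H\|^2\le 2mr'\cdot \Delta^2 a_{\max}^2/4 .
\]
A more careful count removes one factor of $\Delta$. Use $(\sum_{e} x_e)^2\le(\#\text{edges})\sum_e x_e^2$, and note that each edge's terms $a_{ij,\ell}^2/4$ appear in exactly $2r'$ vectors. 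This gives $\|\hat H\|^2\le \Delta\cdot 2mr'\cdot a_{\max}^2/4$. Dividing Step~2 by this bound yields
\[
p_{\min}\ge \frac{a_{\min}^2/4}{\Delta m r' a_{\max}^2/2}=\frac{1}{2\Delta m r'\kappa^2}.
\]
This matches the claimed bound $1/(\Delta m r'\kappa^2)$ up to the constant $2$. That constant can be absorbed by counting the $\pm\ell$ pair as one mode, or by using $|\hat H(\mathbf{k})|^2$ summed over conjugate pairs. In the worst case we would state the bound with $O(\cdot)$ and note that it is unchanged for the polynomial conclusion.

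\emph{Main obstacle.} The delicate step is Step~2: ensuring that distinct edges do not cancel at a shared frequency vector, and fixing the constants. Positivity of the $a_{ij,\ell}$ is exactly what rules out cancellation. Care is needed with the antipodal frequency $\ell=C/2$ and with the sign convention $k_j=-k_i$ when both orientations of an edge are present. The final claim is then immediate: with $\Delta=O(1)$, $\kappa=\mathrm{poly}(n)$, and $r'\le r$, the bound is $1/\mathrm{poly}(n,m,r)$.
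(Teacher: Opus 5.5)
Your argument follows the same route as the paper's proof: edge spectrum $|\hat f_{ij}(\pm\ell)|=a_{ij,\ell}/2$, numerator at least $a_{\min}^2/4$, a $\Delta$-weighted Parseval bound on $\|\hat H\|^2$, then divide. It is more careful in three places. You rule out cancellation using positivity of the $a_{ij,\ell}$. You justify the factor $\Delta$ by Cauchy--Schwarz over arcs sharing a node pair. Most importantly, you count both conjugate frequencies, i.e.\ $2r'$ active vectors per edge. The paper's bound $\|\hat H\|^2\le\Delta\, m\sum_\ell(\max a_\ell)^2/4$ counts only the $+\ell$ modes, and that omission is exactly where its stronger constant comes from.

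So the factor $2$ you cannot remove is genuine, and the stated constant is false in general. Take a single edge with $f=a\cos(2\pi x/C)$ and $C\ge 3$, so that $m=\Delta=r'=\kappa=1$. Then $\hat H$ has exactly two non-zero coefficients, both equal to $a/2$, hence $p_{\min}=1/2$, whereas the proposition asserts $p_{\min}\ge 1$. A perfect matching likewise gives $p_{\min}=1/(2m)$ against the claimed $1/m$.

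For the same reason, drop the suggestion of ``absorbing'' the $2$ by counting $\pm\ell$ as one mode. In Theorem~\ref{thm:correctness}, $p_{\min}$ is a minimum over individual frequency vectors $\mathbf{k}$, so that would prove a statement about a different quantity. The correct conclusion of your argument is $p_{\min}\ge 1/(2\Delta m r'\kappa^2)$, for $r'<C/2$ so that the $2r'$ frequencies are distinct mod $C$. That fully suffices for the $1/\mathrm{poly}(n,m,r)$ claim.

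If you want the stated form in the cases where it is true, sharpen your own Step~2. By Theorem~\ref{thm:fourier_fact}, distinct node pairs have disjoint frequency supports. In a simple graph, each active $\mathbf{k}$ therefore receives exactly one edge's contribution, and no Cauchy--Schwarz factor is needed. This gives $\|\hat H\|^2\le m r' a_{\max}^2/2$ and hence $p_{\min}\ge 1/(2mr'\kappa^2)$, which dominates $1/(\Delta m r'\kappa^2)$ as soon as $\Delta\ge 2$.
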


\begin{proof}
The DFT of $a\cos(2\pi\ell x/C)$ gives $|\hat{f}(\ell)|=a/2$, so $\min|\hat{f}|^2\ge(\min a)^2/4$. By Theorem~\ref{thm:fourier_fact}, at most $\deg(i)$ edges share vertex~$i$, so $\|\hat{H}\|^2\le\Delta\cdot m\cdot\sum_\ell(\max a_\ell)^2/4$. Dividing yields the bound.
\end{proof}

\begin{proposition}[$p_{\min}$ bound for piecewise-linear costs]
\label{prop:pmin_pwl}
For piecewise-linear periodic costs with total variation $V_f$, Fourier coefficients satisfy $|\hat{f}(k)|=O(V_f/k^2)$. Truncating to $r'$ harmonics gives dynamic range $\kappa\le(r')^2$, yielding $p_{\min}\ge 1/\mathrm{poly}(n,m,r)$.
\end{proposition}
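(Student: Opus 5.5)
The plan is to handle the two claims separately: first a decay estimate for a single edge, then a reduction of the $p_{\min}$ bound to the weight-counting argument already used in Proposition~\ref{prop:pmin_cosine}. For the decay estimate, work directly on $\ZC$ with discrete differences rather than continuous derivatives. Write $\Delta f(x)=f(x+1)-f(x)$. Since $f$ is piecewise linear and periodic, $\Delta f$ is piecewise constant, so the second difference $\Delta^2 f$ is supported on the breakpoints. Its total mass $\sum_x|\Delta^2 f(x)|$ is the total variation of the slope; this is the quantity I will take as $V_f$, suitably normalised by $C$.

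Two summations by parts in \eqref{eq:dft_edge} give
\[
\hat f(k)\,(\omega_C^{k}-1)^2=\frac{1}{C}\sum_x \Delta^2 f(x)\,\omega_C^{-kx}
\]
up to a unimodular factor. Hence $|\hat f(k)|\le \frac{1}{C}\sum_x|\Delta^2 f(x)|\big/|1-\omega_C^k|^2$. For $|k|\le C/2$ the chord bound $|1-\omega_C^k|\ge 4|k|/C$ then yields $|\hat f(k)|=O(V_f/k^2)$ once the $C$-normalisation of $V_f$ is fixed. This step is routine.

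The second claim, $\kappa\le (r')^2$, is where the real obstacle lies. The $O(V_f/k^2)$ estimate is only an upper bound: a piecewise-linear cost can have vanishing or anomalously small harmonics, as the triangle wave does at even $k$. Literal truncation to the first $r'$ harmonics therefore does not control $\min|\hat f|$. I would make the truncation rule explicit. Keep frequency $\pm k$ only if $1\le k\le r'$ and $|\hat f(k)|\ge \max_\ell|\hat f(\ell)|/(r')^2$; discard everything else.

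With this rule, $\kappa\le (r')^2$ holds by construction. The substantive point is that the discarded part is small. The tail beyond $r'$ has $\ell^2$-energy at most $\sum_{k>r'}O(V_f^2/k^4)=O(V_f^2/(r')^3)$. The sub-threshold coefficients below $r'$ contribute at most $r'\cdot\max|\hat f|^2/(r')^4$. Both are $O((r')^{-3})$ relative to the dominant harmonic, provided that harmonic is of order $V_f$. I would argue this last condition holds for the engineering costs of Table~\ref{tab:domains}, since a nonconstant piecewise-linear cost has $\sum_k k^2|\hat f(k)|^2\asymp V_f^2$ from Parseval applied to $\Delta f$. I would state it as a standing hypothesis if a fully general lower bound proves elusive.

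Finally, combine these per-edge bounds using Theorem~\ref{thm:fourier_fact}. The non-zero modes of $\hat H$ are the per-edge anti-diagonal modes, each equal to some $\hat f_{ij}(k)$. This gives $\|\hat H\|^2\le \Delta\, m\, r'\max|\hat f|^2$, allowing for coincident modes exactly as in Proposition~\ref{prop:pmin_cosine}, while every retained mode has $|\hat H(\mathbf{k})|^2\ge \max|\hat f|^2/(r')^4$ (assuming comparable edge scales, else absorb an edge-scale ratio into $\kappa$). Therefore $p_{\min}\ge 1/(\Delta\, m\, (r')^5)$, which is $1/\mathrm{poly}(n,m,r)$ since $r'\le r$ and $\Delta\le n$.
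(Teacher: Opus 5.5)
The paper gives no proof of this proposition beyond the chain compressed into the statement itself. That chain is: the $O(V_f/k^2)$ decay, then $\kappa\le(r')^2$ read off by comparing the first and the $r'$-th harmonic, then substitution into $p_{\min}\ge 1/(\Delta m r'\kappa^2)$ from Proposition~\ref{prop:pmin_cosine}.

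Your first and last steps follow that chain and reach the same bound $1/(\Delta m (r')^5)$. The decay argument via $\widehat{\Delta^2 f}(k)=(\omega_C^{k}-1)^2\hat f(k)$ and the chord bound is correct. You are also right that $V_f$ must mean the ($C$-normalised) variation of the slope: with the variation of $f$ itself, a sawtooth decays only like $1/k$.

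You depart from the paper at the middle step, and the departure is warranted. The paper treats the upper bound $O(V_f/k^2)$ as if it were attained on every retained harmonic. An asymmetric triangle wave with peak at $t=a$ has $|\hat f(k)|\propto|\sin(\pi k a)|/k^2$. For $a$ close to, but not equal to, $1/2$, the second harmonic is non-zero and arbitrarily small, so truncating to $r'=2$ gives unbounded $\kappa$. The statement therefore cannot be proved without some modification, and your thresholded truncation, which makes $\kappa\le(r')^2$ true by definition, is a sound repair.

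The cost of that repair is a more aggressive change to the cost function. That is why you spend the decay estimate on bounding the discarded energy rather than on $\kappa$. The proposition does not assert that truncation preserves the minimiser, so your proof of the repaired statement does not depend on that fidelity argument. Your edge-scale caveat is genuinely needed, and the paper omits it: a per-edge $\kappa$ says nothing about an edge whose entire cost is tiny relative to the others.

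Two corrections.

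\textbf{The justification for ``the dominant harmonic is of order $V_f$'' is false.} Parseval applied to $\Delta f$ gives $\sum_k|1-\omega_C^{k}|^2|\hat f(k)|^2=\frac1C\sum_x|\Delta f(x)|^2$. That is the mean-square slope, which is bounded above by $V_f^2$ (up to normalisation) but not below. A triangle wave of frequency $N$ with slopes $\pm s$ has $V_f\asymp Ns$, mean-square slope $s^2\asymp V_f^2/N^2$, and $\max_k|\hat f(k)|\asymp V_f/N^2$. If $N>r'$, your truncation rule discards everything. So this condition must remain an explicit hypothesis; no general lower bound exists.

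\textbf{The claim $r'\le r$ does not hold in general.} With thresholding, the number $r$ of surviving modes can be much smaller than $r'$. Your bound is polynomial in the truncation order $r'$, which should be treated as a separate parameter rather than absorbed into $r$.
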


\section{Proofs of NP-Hardness and Classical Query Lower Bound}
\label{app:complexity_proofs}

\subsection*{Proof of Theorem~\ref{thm:nphard} (NP-Hardness)}

\begin{proof}
We reduce from MAX-CUT. Given a graph $G=(V,\mathcal{A})$, construct a Fourier-NC instance with $C=2$ and $f_{ij}(x)=\cos(\pi x)$. Then $f_{ij}(0)=+1$ and $f_{ij}(1)=-1$, so $H(\mu)=m-2\cdot\mathrm{cut}(\mu)$. Minimising $H$ is equivalent to MAX-CUT~\cite{Garey1976}. NP-completeness of the decision version follows because a candidate $\mu$ is verifiable in $O(m)$ time.
\end{proof}

\subsection*{Proof of Theorem~\ref{thm:lower_bound} (Classical Query Lower Bound)}

\begin{proof}
Take $K_n$ with identical cosine costs plus $\varepsilon\cdot\delta(\mu,\mu^\dagger)$. For $\mu\neq\mu^\dagger$, the oracle returns the same value regardless of~$\mu^\dagger$, so $Q\ge C^n/2$ queries are needed. Grover achieves $O(C^{n/2})$---still exponential.
\end{proof}

\section{Numerical Validation}
\label{app:validation}

We validated $p_{\min}$ bounds and mode recovery on five topologies ($C=32$, 100 Monte Carlo trials). Table~\ref{tab:validation} reports results.

\begin{table}[!htbp]
\centering
\caption{Numerical validation ($C=32$, 100 trials). Bound: $1/(n\cdot m\cdot r)$.}
\label{tab:validation}
\resizebox{\columnwidth}{!}{%
\begin{tabular}{@{}llrrrrr@{}}
\toprule
Topology & Cost & $n$ & $m$ & $p_{\min}$ & Bound & Avg.\ meas.\ \\
\midrule
$4{\times}4$ grid & cos, $r{=}2$ & 16 & 24 & 0.038 & $1.3{\times}10^{-3}$ & 37 \\
8-ring & cos, $r{=}2$ & 8 & 8 & 0.045 & $7.8{\times}10^{-3}$ & 36 \\
$K_8$ & cos, $r{=}2$ & 8 & 28 & 0.035 & $2.2{\times}10^{-3}$ & 41 \\
Barbell $2{\times}K_5$ & cos, $r{=}2$ & 10 & 21 & 0.033 & $2.4{\times}10^{-3}$ & 44 \\
8-ring (PESP) & PWL, $r{=}4$ & 8 & 8 & 0.029 & $3.9{\times}10^{-3}$ & 50 \\
\bottomrule
\end{tabular}}
\end{table}

Across all topologies, $p_{\min}$ exceeds the polynomial threshold by $5$--$29\times$ and mode recovery converges within the coupon-collector bound.

\section{Full Proof of Theorem~\ref{thm:sk_quantum}}
\label{app:sk_proof}

\begin{proof}
The proof proceeds in three stages, paralleling the abelian proof but with matrix-valued Fourier coefficients.

\emph{Stage~1 (Quantum: Fourier coefficient recovery via linearisation).}
By the Peter--Weyl theorem, $f(\sigma)=\sum_{\lambda\vdash k}d_\lambda\,\mathrm{tr}(\hat{f}(\lambda)\,\rho^\lambda(\sigma))$. Group elements are encoded in $O(k\log k)$ qubits via the Lehmer code. The QFT over~$\Sk$~\cite{Beals1997} maps $|\sigma\rangle\mapsto\sum_\lambda\sqrt{d_\lambda/k!}\sum_{p,q}\rho^\lambda_{pq}(\sigma)|\lambda,p,q\rangle$ in $O(k^2\log k)$ gates. Linearisation ($K_{ij}=P\cdot\|f_{ij}\|_\infty$) gives $\hat{g}(\lambda)\approx\delta_{\lambda,\mathrm{triv}}+(2\pi i/K_{ij})\hat{f}_{ij}(\lambda)+O(1/P^2)$. For class functions, $\hat{f}(\lambda)=c_\lambda I_{d_\lambda}$, so the linearised coefficients are scalars. This stage is inherently quantum (Assumption~\ref{ass:no_sfft}).

\emph{Stage~2 (Classical: optimisation and propagation).}
Given the $r$ active irrep labels and coefficients $c_\lambda$, evaluate $f_{ij}(\sigma)=\sum_\lambda c_\lambda\chi^\lambda(\sigma)$ on $p(k)$ conjugacy-class representatives ($e^{O(\sqrt{k})}$ evaluations). Propagate the global optimum along a spanning tree.

\emph{Stage~3 (Complexity accounting).}
Total quantum query complexity: $O(m\cdot r\cdot k^2\log k)$. Classical minimisation adds $O(m\cdot r\cdot p(k))$ operations; since $p(k)\sim e^{\pi\sqrt{2k/3}}$, total time is $e^{O(\sqrt{k})}$---far below $O(k!)$.
\end{proof}

\section{Irreducible Representations of $\Sk$}
\label{app:irreps}

The irreps of $\Sk$ are indexed by partitions $\lambda\vdash k$ (Young diagrams). The dimension of irrep $\rho^\lambda$ is given by the \emph{hook length formula}~\cite{FrameRobinsonThrall1954}:
\begin{equation}
\label{eq:hook_length}
\dim(\rho^\lambda) \;=\; \frac{k!}{\prod_{(i,j)\in\lambda} h(i,j)},
\end{equation}
where $h(i,j)=(\lambda_i - j)+(\lambda'_j - i)+1$ is the hook length at box $(i,j)$. For example, $S_4$ has five irreps with dimensions $1,3,2,3,1$, satisfying $\sum d_\lambda^2=4!$. The maximum dimension grows as $\Theta(\sqrt{k!})$, super-exponentially; combined with $p(k)\sim e^{\pi\sqrt{2k/3}}$ conjugacy classes, the representation-theoretic complexity of~$\Sk$ vastly exceeds that of any abelian group. The largest abelian subgroup has order ${\sim}3^{k/3}$\footnote{Generated by disjoint cycles of length~3; see Dixon and Mortimer~\cite{DixonMortimer1996}, \S\,2.6.}, giving abelian index $\alpha(\Sk)=k!/3^{k/3}$. Figure~\ref{fig:sk_irreps} illustrates the landscape.

\begin{figure}[!htbp]
\centering
\includegraphics[width=\columnwidth]{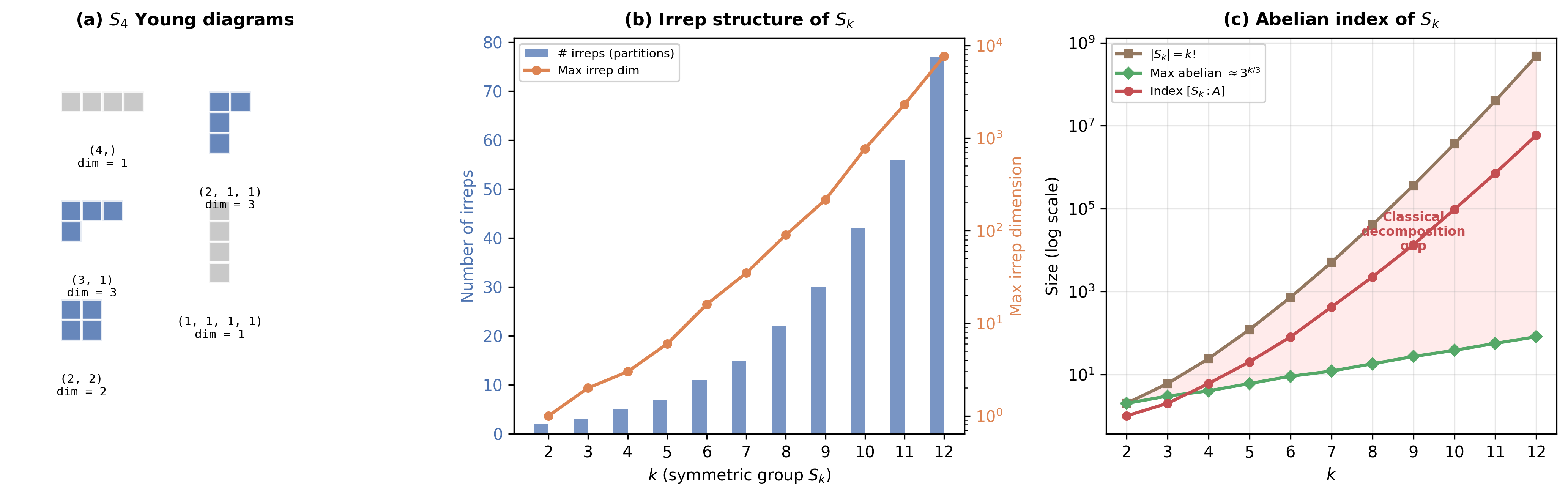}
\caption{Irreducible representation landscape for $\Sk$. (a)~Young diagrams and hook-length dimensions for $S_4$. (b)~Number of irreps and maximum dimension vs.~$k$. (c)~The super-exponential abelian index gap.}
\label{fig:sk_irreps}
\end{figure}

\section{On Extremal Conjugacy Classes}
\label{app:ecc_verification}

A natural route to establishing DMPC$\,\in\,$co-NP (Remark~\ref{rem:complexity_class}) would be to show that the minimiser of an $r$-sparse class function on~$\Sk$ always lies in a $\mathrm{poly}(r,k)$-sized ``extremal'' subset of conjugacy classes. We tested the candidate definition that extremal classes are those whose cycle type has at most~$r$ distinct part sizes. For each $(r,k)$ pair with $r\in\{2,3,4\}$ and $k\in\{3,\ldots,25\}$, we generated $10{,}000$ ($k\le15$) or $5{,}000$ ($k>15$) random $r$-sparse class functions, evaluated them on all $p(k)$ conjugacy classes, and recorded the minimiser.

\paragraph{Results.} The candidate definition holds trivially for $k\le 5$ (the extremal set covers all classes). For $k\ge 6$, counterexamples appear: at $(r{=}2,\,k{=}25)$, $47\%$ of trials have minimisers outside the extremal set, with cycle types exhibiting up to $5$ distinct part sizes. The maximum number of distinct parts in minimisers grows as $\Theta(\sqrt{k})$ \emph{independently of~$r$}. At threshold $\lceil\sqrt{k}\rceil$, the qualifying set covers ${\ge}94\%$ of all conjugacy classes across all tested~$k$, offering no useful restriction.

\paragraph{Conclusion.} No $\mathrm{poly}(r,k)$-sized characterisation of the minimising class is apparent from the data. The co-NP membership question for DMPC therefore remains open and may require fundamentally different techniques (e.g.\ algebraic certificates rather than combinatorial witnesses).

\section{Frustration Gap and Hybrid Approach}
\label{app:frustration}

\begin{proposition}[Frustration gap bound]
\label{prop:frustration_gap}
Let $G=(V,\mathcal{A})$ have cycle rank $\beta=m-n+1$, and let $\Delta_{\max} = \max_{(i,j)}\bigl(\max_{d}f_{ij}(d) - f_{ij}(d^*_{ij})\bigr)$. The tree-solver output $\mu_T$ satisfies $H(\mu_T) - H(\mu^*) \le \beta\cdot\Delta_{\max}$.
\end{proposition}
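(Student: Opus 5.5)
The plan is to compare $H(\mu_T)$ and $H(\mu^*)$ edge by edge, using a spanning tree $T$ to split the edge set. First I fix what the tree solver outputs: choose a spanning tree $T\subseteq\mathcal{A}$ with $n-1$ edges. Fix a root $\mu_{\mathrm{root}}=0$ and set $\mu_i-\mu_j\equiv d^*_{ij}\pmod C$ along every tree edge; this is consistent because $T$ has no cycles. As a result, every tree edge attains its individual minimum, $f_{ij}((\mu_T)_i-(\mu_T)_j)=f_{ij}(d^*_{ij})$. The remaining $m-(n-1)=\beta$ non-tree edges (the cotree edges) receive whatever offset difference the tree propagation forces.

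Next I split both costs over $T$ and $\mathcal{A}\setminus T$:
\[
H(\mu_T)-H(\mu^*)=\sum_{(i,j)\in T}\bigl[f_{ij}(d^*_{ij})-f_{ij}(\mu^*_i-\mu^*_j)\bigr]+\sum_{(i,j)\notin T}\bigl[f_{ij}((\mu_T)_i-(\mu_T)_j)-f_{ij}(\mu^*_i-\mu^*_j)\bigr].
\]
In the first sum, each term is $\le 0$, since $d^*_{ij}$ minimises $f_{ij}$; so the whole first sum is $\le 0$. In the second sum, each term is at most $\max_d f_{ij}(d)-\min_d f_{ij}(d)=\max_d f_{ij}(d)-f_{ij}(d^*_{ij})\le\Delta_{\max}$, because $f_{ij}(\mu^*_i-\mu^*_j)\ge f_{ij}(d^*_{ij})$. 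There are exactly $\beta$ cotree edges, so the second sum is at most $\beta\Delta_{\max}$. Combining the two bounds gives $H(\mu_T)-H(\mu^*)\le\beta\cdot\Delta_{\max}$.

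I do not expect a real obstacle; the main care point is the definition of the tree solver. The bound needs $\mu_T$ to realise the per-edge minimiser $d^*_{ij}$ on every tree edge, i.e.\ exact minimisers, not just approximate Fourier-recovered phases. In the quantum setting, this means invoking the congruence reconstruction of Theorem~\ref{thm:correctness} (Stage~2) to justify that the recovered tree differences equal some $d^*_{ij}\in D^*_{ij}$. If multiple minimisers exist, any selection works, since the argument only uses $f_{ij}(d^*_{ij})=\min f_{ij}$. As a consistency check, in the frustration-free case the cotree edges also attain $d^*_{ij}$ (the holonomy vanishes), and the gap becomes $0$, matching the statement that the tree solver is exact there.
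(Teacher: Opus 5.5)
Your proof is correct and follows essentially the same route as the paper's. Tree edges sit at their per-edge minima, each of the $\beta$ cotree edges exceeds its minimum by at most $\Delta_{\max}$, and $H(\mu^*)\ge\sum_{(i,j)}f_{ij}(d^*_{ij})$; your edge-by-edge split of $H(\mu_T)-H(\mu^*)$ just regroups the paper's chain $H(\mu_T)\le\sum_{(i,j)}f_{ij}(d^*_{ij})+\beta\Delta_{\max}\le H(\mu^*)+\beta\Delta_{\max}$. Like the paper, you implicitly assume $G$ is connected, so that a spanning tree leaves exactly $\beta=m-n+1$ cotree edges; with $c>1$ components there are $m-n+c$ non-forest edges, and $\beta$ would have to be read accordingly.
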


\begin{proof}
The tree solver achieves $\min_d f_{ij}(d)$ on all $n-1$ tree edges. Each of $\beta$ non-tree edges incurs at most $\Delta_{\max}$ excess cost. Therefore $H(\mu_T)\le\sum_{\mathrm{all}}f_{ij}(d^*_{ij})+\beta\cdot\Delta_{\max}\le H(\mu^*)+\beta\cdot\Delta_{\max}$.
\end{proof}

For small $\beta$, quantum Fourier recovery plus classical enumeration over $C^\beta$ residual configurations is exact when $\beta=O(\log n/\log C)$.

\paragraph{Code and data availability.}
Simulation scripts for numerical validation (Appendix~\ref{app:validation}), $p_{\min}$ bound verification, and ECC conjecture testing (Appendix~\ref{app:ecc_verification}) are available in the supplementary material accompanying this submission.

\bibliographystyle{quantum}
\bibliography{refs}

\begin{thebibliography}{10}

\bibitem{Hassanieh2012}
H.~Hassanieh, P.~Indyk, D.~Katabi, and E.~Price.
\newblock ``Simple and practical algorithm for sparse {F}ourier transform''.
\newblock \href{https://dx.doi.org/10.1137/1.9781611973099.93}{Proceedings of
  the 23rd Annual ACM--SIAM Symposium on Discrete Algorithms (SODA)Pages
  1183--1194}~(2012).

\bibitem{Gartner1975}
N.~H. Gartner, J.~D.~C. Little, and H.~Gabbay.
\newblock ``{MITROP}: A computer program for simultaneous optimisation of
  offsets, splits and cycle time''.
\newblock Traffic Engineering and Control {\bf 17}, 355--359~(1975).
\newblock  url:~\url{https://trid.trb.org/view/48074}.

\bibitem{GartnerDeshpande2009}
N.~H. Gartner and R.~M. Deshpande.
\newblock ``Harmonic analysis and optimization of traffic signal systems''.
\newblock In Transportation and Traffic Theory 2009: Golden Jubilee.
\newblock \href{https://dx.doi.org/10.1007/978-1-4419-0820-9_17}{Pages
  333--354}.
\newblock Springer~(2009).

\bibitem{DixitPech2026}
V.~V. Dixit and R.~Pech.
\newblock ``A constant time quantum algorithm for robust network signal
  coordination problem''~(2026).
\newblock  \href{http://arxiv.org/abs/2603.04758}{arXiv:2603.04758}.

\bibitem{Little1981}
J.~D.~C. Little, M.~D. Kelson, and N.~H. Gartner.
\newblock ``{MAXBAND}: A versatile program for setting signals on arteries and
  triangular networks''.
\newblock Transportation Research Record {\bf 795}, 40--46~(1981).
\newblock  url:~\url{https://trid.trb.org/view/175107}.

\bibitem{Gartner1991}
N.~H. Gartner, S.~F. Assmann, F.~Lasaga, and D.~L. Hou.
\newblock ``A multi-band approach to arterial traffic signal optimization''.
\newblock \href{https://dx.doi.org/10.1016/0191-2615(91)90013-9}{Transportation
  Research Part B {\bf 25}, 55--74}~(1991).

\bibitem{Serafini1989}
P.~Serafini and W.~Ukovich.
\newblock ``A mathematical model for periodic scheduling problems''.
\newblock \href{https://dx.doi.org/10.1137/0402049}{SIAM Journal on Discrete
  Mathematics {\bf 2}, 550--581}~(1989).

\bibitem{LindnerLiebchen2022}
N.~Lindner and C.~Liebchen.
\newblock ``Parameterized complexity of periodic timetabling''.
\newblock \href{https://dx.doi.org/10.1007/s10951-021-00713-z}{Journal of
  Scheduling {\bf 25}, 5--22}~(2022).

\bibitem{RamaswamiParhi1989}
V.~Ramaswami and K.~Parhi.
\newblock ``Distributed scheduling of broadcasts in a radio network''.
\newblock In IEEE INFOCOM.
\newblock \href{https://dx.doi.org/10.1109/INFCOM.1989.101537}{Pages 497--504}.
\newblock ~(1989).

\bibitem{Gronkvist2005}
J.~Gr{\"o}nkvist.
\newblock ``Interference-based scheduling in spatial reuse {TDMA}''.
\newblock PhD thesis.
\newblock Royal Institute of Technology (KTH), Stockholm.
\newblock ~(2005).
\newblock
  url:~\url{https://www.diva-portal.org/smash/get/diva2:12179/FULLTEXT01.pdf}.

\bibitem{Hale1980}
W.~K. Hale.
\newblock ``Frequency assignment: Theory and applications''.
\newblock \href{https://dx.doi.org/10.1109/PROC.1980.11899}{Proceedings of the
  IEEE {\bf 68}, 1497--1514}~(1980).

\bibitem{Aardal2007}
K.~I. Aardal, S.~P.~M. van Hoesel, A.~M. C.~A. Koster, C.~Mannino, and
  A.~Sassano.
\newblock ``Models and solution techniques for frequency assignment problems''.
\newblock \href{https://dx.doi.org/10.1007/s10479-007-0178-0}{Annals of
  Operations Research {\bf 153}, 79--129}~(2007).

\bibitem{BienstockVerma2019}
D.~Bienstock and A.~Verma.
\newblock ``Strong {NP}-hardness of {AC} power flows feasibility''.
\newblock \href{https://dx.doi.org/10.1016/j.orl.2019.03.008}{Operations
  Research Letters {\bf 47}, 215--218}~(2019).

\bibitem{LavaeiLow2012}
J.~Lavaei and S.~H. Low.
\newblock ``Zero duality gap in optimal power flow problem''.
\newblock \href{https://dx.doi.org/10.1109/TPWRS.2011.2160974}{IEEE
  Transactions on Power Systems {\bf 27}, 92--107}~(2012).

\bibitem{Kuramoto1975}
Y.~Kuramoto.
\newblock ``Self-entrainment of a population of coupled non-linear
  oscillators''.
\newblock In International Symposium on Mathematical Problems in Theoretical
  Physics.
\newblock \href{https://dx.doi.org/10.1007/BFb0013365}{Pages 420--422}.
\newblock ~(1975).

\bibitem{Strogatz2000}
S.~H. Strogatz.
\newblock ``From {K}uramoto to {C}rawford: exploring the onset of
  synchronization in populations of coupled oscillators''.
\newblock \href{https://dx.doi.org/10.1016/S0167-2789(00)00094-4}{Physica D:
  Nonlinear Phenomena {\bf 143}, 1--20}~(2000).

\bibitem{Awerbuch1985}
B.~Awerbuch.
\newblock ``Complexity of network synchronization''.
\newblock \href{https://dx.doi.org/10.1145/4221.4227}{Journal of the ACM {\bf
  32}, 804--823}~(1985).

\bibitem{OlfatiSaber2007}
R.~Olfati-Saber, J.~A. Fax, and R.~M. Murray.
\newblock ``Consensus and cooperation in networked multi-agent systems''.
\newblock \href{https://dx.doi.org/10.1109/JPROC.2006.887293}{Proceedings of
  the IEEE {\bf 95}, 215--233}~(2007).

\bibitem{Kempe2003}
D.~Kempe, J.~Kleinberg, and {\'E}.~Tardos.
\newblock ``Maximizing the spread of influence through a social network''.
\newblock In KDD.
\newblock \href{https://dx.doi.org/10.1145/956750.956769}{Pages 137--146}.
\newblock ~(2003).

\bibitem{Dauscha1985}
W.~Dauscha, H.~D. Modrow, and A.~Neumann.
\newblock ``On cyclic sequence types for constructing cyclic schedules''.
\newblock \href{https://dx.doi.org/10.1007/BF01919743}{Zeitschrift f{\"u}r
  Operations Research {\bf 29}, 1--30}~(1985).

\bibitem{Grover1996}
L.~K. Grover.
\newblock ``A fast quantum mechanical algorithm for database search''.
\newblock In Proceedings of the 28th Annual ACM Symposium on Theory of
  Computing (STOC).
\newblock \href{https://dx.doi.org/10.1145/237814.237866}{Pages 212--219}.
\newblock ~(1996).

\bibitem{Grover1997}
L.~K. Grover.
\newblock ``Quantum mechanics helps in searching for a needle in a haystack''.
\newblock \href{https://dx.doi.org/10.1103/PhysRevLett.79.325}{Physical Review
  Letters {\bf 79}, 325--328}~(1997).

\bibitem{Farhi2014}
E.~Farhi, J.~Goldstone, and S.~Gutmann.
\newblock ``A quantum approximate optimization algorithm''~(2014).
\newblock  \href{http://arxiv.org/abs/1411.4028}{arXiv:1411.4028}.

\bibitem{AlbashLidar2018}
T.~Albash and D.~A. Lidar.
\newblock ``Adiabatic quantum computation''.
\newblock \href{https://dx.doi.org/10.1103/RevModPhys.90.015002}{Reviews of
  Modern Physics {\bf 90}, 015002}~(2018).

\bibitem{Childs2003}
Andrew~M. Childs, Richard Cleve, Enrico Deotto, Edward Farhi, Sam Gutmann, and
  Daniel~A. Spielman.
\newblock ``Exponential algorithmic speedup by a quantum walk''.
\newblock \href{https://dx.doi.org/10.1145/780542.780552}{Proceedings of the
  35th ACM Symposium on Theory of Computing (STOC)Pages 59--68}~(2003).

\bibitem{HHL2009}
Aram~W. Harrow, Avinatan Hassidim, and Seth Lloyd.
\newblock ``Quantum algorithm for linear systems of equations''.
\newblock \href{https://dx.doi.org/10.1103/PhysRevLett.103.150502}{Physical
  Review Letters {\bf 103}, 150502}~(2009).

\bibitem{Garey1976}
M.~R. Garey, D.~S. Johnson, and L.~Stockmeyer.
\newblock ``Some simplified {NP}-complete graph problems''.
\newblock \href{https://dx.doi.org/10.1016/0304-3975(76)90059-1}{Theoretical
  Computer Science {\bf 1}, 237--267}~(1976).

\bibitem{Kendall1938}
M.~G. Kendall.
\newblock ``A new measure of rank correlation''.
\newblock \href{https://dx.doi.org/10.1093/biomet/30.1-2.81}{Biometrika {\bf
  30}, 81--93}~(1938).

\bibitem{Beals1997}
R.~Beals.
\newblock ``Quantum computation of {F}ourier transforms over symmetric
  groups''.
\newblock In Proceedings of the 29th Annual ACM Symposium on Theory of
  Computing (STOC).
\newblock \href{https://dx.doi.org/10.1145/258533.258548}{Pages 48--53}.
\newblock ~(1997).

\bibitem{NielsenChuang2000}
Michael~A. Nielsen and Isaac~L. Chuang.
\newblock ``Quantum computation and quantum information''.
\newblock \href{https://dx.doi.org/10.1017/CBO9780511976667}{Cambridge
  University Press}. ~(2000).

\bibitem{Bartholdi1989}
John~J. Bartholdi, III, Craig~A. Tovey, and Michael~A. Trick.
\newblock ``Voting schemes for which it can be difficult to tell who won the
  election''.
\newblock \href{https://dx.doi.org/10.1007/BF00303169}{Social Choice and
  Welfare {\bf 6}, 157--165}~(1989).

\bibitem{MooreRussellSchulman2005}
Cristopher Moore, Alexander Russell, and Leonard~J. Schulman.
\newblock ``The symmetric group defies strong {F}ourier sampling''.
\newblock In Proceedings of the 46th Annual IEEE Symposium on Foundations of
  Computer Science (FOCS).
\newblock \href{https://dx.doi.org/10.1109/SFCS.2005.73}{Pages 479--488}.
\newblock ~(2005).

\bibitem{KushilevitzMansour1993}
E.~Kushilevitz and Y.~Mansour.
\newblock ``Learning decision trees using the {F}ourier spectrum''.
\newblock In Proceedings of the 25th Annual ACM Symposium on Theory of
  Computing (STOC).
\newblock \href{https://dx.doi.org/10.1145/167088.167225}{Pages 455--464}.
\newblock ~(1993).

\bibitem{DixonMortimer1996}
John~D. Dixon and Brian Mortimer.
\newblock ``Permutation groups''.
\newblock \href{https://dx.doi.org/10.1007/978-1-4612-0731-3}{Volume 163 of
  Graduate Texts in Mathematics}.
\newblock Springer-Verlag. New York~(1996).

\bibitem{Barthelemy2011}
Marc Barth{\'e}lemy.
\newblock ``Spatial networks''.
\newblock \href{https://dx.doi.org/10.1016/j.physrep.2010.11.002}{Physics
  Reports {\bf 499}, 1--101}~(2011).

\bibitem{FrameRobinsonThrall1954}
J.~S. Frame, G.~de~B. Robinson, and R.~M. Thrall.
\newblock ``The hook graphs of the symmetric group''.
\newblock \href{https://dx.doi.org/10.4153/CJM-1954-030-1}{Canadian Journal of
  Mathematics {\bf 6}, 316--324}~(1954).

\end{thebibliography}

\end{document}